\newif\iffull\fulltrue
\let\doendproof\endproof
\renewcommand\endproof{~\hfill\qed\doendproof}
\newcommand{\R}{\mathbb{R}}
\title{Realization and Connectivity of the Graphs of Origami Flat Foldings}
\author{David Eppstein}
\institute{Department of Computer Science, University of California, Irvine\thanks{Supported in part by NSF grants CCF-1618301 and CCF-1616248.}}
\begin{document}
\maketitle

\begin{abstract}
We investigate the graphs formed from the vertices and creases of an origami pattern that can be folded flat along all of its creases. As we show, this is possible for a tree if and only if the internal vertices of the tree all have even degree greater than two. However, we prove that (for unbounded sheets of paper, with a vertex at infinity representing a shared endpoint of all creased rays) the graph of a folding pattern must be 2-vertex-connected and 4-edge-connected.
\end{abstract}

\section{Introduction}

This work concerns the following question: Which graphs can be drawn as the graphs of origami flat folding patterns?

In origami and other forms of paper folding, a \emph{flat folding} is a type of construction in which an initially-flat piece of paper is folded so that the resulting folded shape lies flat in a plane and has a desired shape or visible pattern. This style of folding may be used as the initial base from which a three-dimensional origami figure is modeled, or it may be an end on its own. Flat foldings have been extensively studied in research on the mathematics of paper folding.
The folding patterns that can fold flat with only a single vertex have been completely characterized,  for standard models of origami~\cite{Jus-BO-86,Hul-SEC-94,Huf-TC-76,HusHus-TGO-79,Rob-PRSE-77,Kaw-OST-89,Mur-BJCA-66a,Mur-BJCA-66b}, for \emph{rigid origami} in which the paper must continuously move from its unfolded state to its folded state without bending anywhere except at its given creases~\cite{AbeCanDem-JoCG-16}, and even for single-vertex folding patterns whose paper does not form a single flat sheet~\cite{AbeDemDem-JoCG-18}. However, the combinatorics of multi-vertex flat folding patterns is much less well understood, and testing whether a multi-vertex pattern folds flat is NP-hard~\cite{BerHay-SODA-96}.

From the point of view of graph drawing, origami folding patterns can be thought of as planar graphs, drawn with straight line edges in the Euclidean plane, with each edge representing a crease that must be folded. For instance, the familiar bird base, a starting point for the classic three-dimensional origami crane, can be thought of as a graph drawing of a planar graph with 13 vertices (\autoref{fig:bird-base}). This naturally raises the question (analogous to similar questions for other types of geometric graphs such as Voronoi diagrams~\cite{LioMei-CGTA-03}): which graphs can be drawn this way?
The NP-completeness of recognizing multi-vertex flat folding patterns does not extend to this question, because the completeness result is for folding patterns that have already been embedded with a given geometry and its proof depends on the specific geometry of the embedding. Here, instead, we ask whether an embedding exists.
We do not resolve this question, but we provide partial answers to it in two different directions.

\begin{figure}[t]
\centering\includegraphics[width=0.6\textwidth]{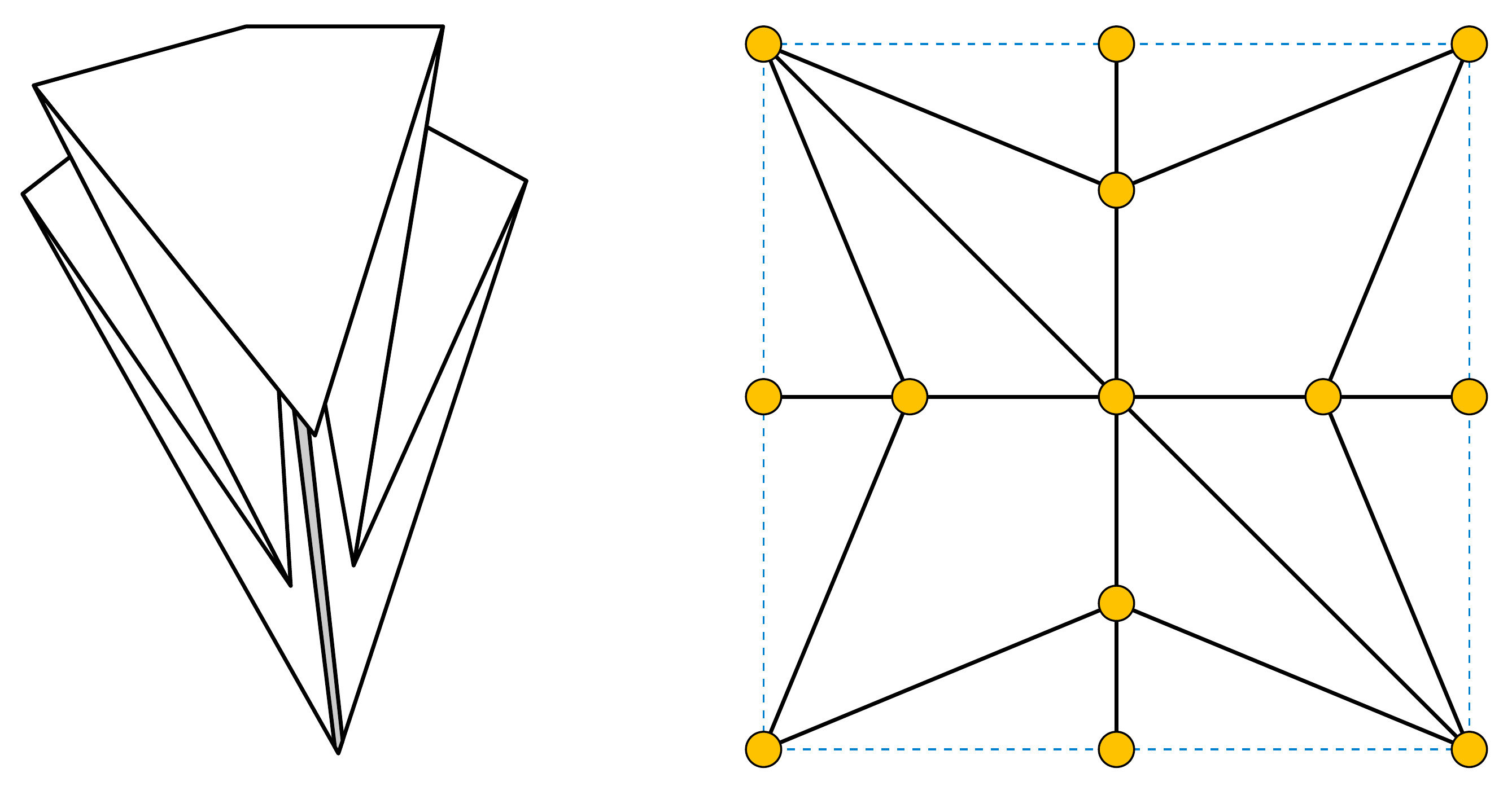}
\caption{Origami bird base (as illustrated by Fred the Oyster at \url{https://commons.wikimedia.org/wiki/File:Bird_base.svg}) and the corresponding folding pattern, interpreted as a graph drawing. The black lines indicate the final creases of the bird base. Temporary creases made while folding the base but later flattened out are not included. Blue dashed lines indicate the boundary of the sheet of paper; these lines are not considered as edges of the graph because they are not creased.}
\label{fig:bird-base}
\end{figure}

First, we investigate the trees that may be drawn as flat folding patterns. For this problem, we make the simplifying assumption that the sheet of paper to be folded is infinite, with internal vertices of the tree at points where multiple creases come together, and with the leaves of the tree corresponding to creases along infinite rays. Cutting the infinite paper of such a drawing along a square that surrounds all the internal vertices would produce a finite representation of the same tree with its leaves on the boundary of the square, like the representation of a non-tree graph in \autoref{fig:bird-base}.
Similar tree-drawing styles, with infinite rays for the leaves of the trees, have been used in past work on drawings of trees as Voronoi diagrams~\cite{LioMei-CGTA-03}, straight skeletons~\cite{AicCheDev-CCCG-12},\footnote{Straight skeletons have also been used to construct folding patterns~\cite{DemDemLub-JCGCG-98}. However, this technique adds extra folds to the skeleton, so the realizations of trees as straight skeletons do not yield realizations of the same trees as flat folding patterns.} or with optimized angular resolution~\cite{CarEpp-GD-06}.
For this model of origami folding and tree realization, we provide a complete characterization: a tree may be drawn in this way if and only if all of its internal vertices have even degree greater than two.

Second, we investigate the connectivity restrictions on the graphs that may be drawn as flat folding patterns. This type of constraint has proven very fruitful in past questions about the geometric realizations of planar graphs, providing complete characterizations of the graphs of convex polyhedra (Steinitz's theorem)~\cite{Ste-EMW-22}, drawings with rectangular faces (``rectangular duals'')~\cite{KozKin-Nw-85,BhaSah-Algo-88,He-SICOMP-93,KanHe-WG-93}, orthogonal polyhedra~\cite{EppMum-JoCG-14}, and two-dimensional soap bubble clusters~\cite{Epp-DCG-14}.

Trees are not highly connected, and may be drawn as flat folding diagrams, but it turns out that that these diagrams remain highly connected through the boundary of the drawing. To capture this boundary connectivity, we modify our mathematical model of flat folding. We again assume an infinite sheet of paper, but we treat creases along infinite rays as all having a single shared endpoint at infinity, which forms another vertex of the graph. In this model, the tree foldings of the other model become series-parallel graphs, in which all the leaves of the tree have been merged into a single supervertex.

We prove that, for this model of graphs as folding patterns, the graphs that may be realized are highly restricted, beyond even the graphs of polyhedra and beyond the immediate restriction (from the one-vertex case) that all vertices have even degree. In particular, they are necessarily 2-vertex-connected and 4-edge-connected. More strongly, the vertex at infinity is not an articulation vertex, and any subset of vertices that separates the graph and does not include the vertex at infinity must include at least four other vertices. These connectivity restrictions hold even for a weaker model of \emph{local flat foldability} in which we seek a piecewise linear map from the folding pattern to its folded state in the plane without regard to whether this folding can be embedded without self-intersections into three-dimensional space. Our realizations of trees as flat folding patterns show that the 2-vertex-connectivity and 4-edge-connectivity conditions are both tight: no higher restriction on connectivity is possible.

\section{Preliminaries}

\subsection{Mathematical model of folding}

Departing from the usual square-paper model of origami in order to avoid complications from its boundary conditions, we model the sheet of paper to be folded as the entire Euclidean plane.
We first define a \emph{local flat folding}. This is a highly simplified model of how a piece of paper might be folded that only takes into account local constraints (the paper can only be folded, not stretched, sheared, or crumpled), does not prevent self-intersections, and does not even represent the most basic information about how the folding might occur in three dimensions, such as whether a given fold is a mountain fold or a valley fold. 

\begin{definition}
We define a continuous function $\varphi$ from the plane to itself to be a local flat folding if every point $p$ of the plane has one of the following three types:
\begin{itemize}
\item An \emph{unfolded point} of a local flat folding is a point $p$ such that $\varphi$ is a \emph{local isometry}: there is a neighborhood of $p$ that is mapped by $\varphi$ in a distance-preserving way (necessarily a combination of translation, rotation, or reflection of the plane).
\item A \emph{crease point} of a local flat folding is a point $p$ that has a neighborhood $N$
that can be covered by two subsets, each containing $p$ and each mapped by $\varphi$ in a different distance-preserving way. Necessarily, the boundary between these two subsets must be a line containing $p$. To preserve continuity of the mapping, the two distinct isometric mappings for the two subsets must be reflections of each other across the image of this line. The points within $N$ that belong to this fold line are also crease points, and the other points within $N$ are unfolded points.
\item A \emph{vertex point} of a local flat folding is a point $p$ that has a neighborhood $N$ that can be covered by finitely many (and at least three) subsets, each containing $p$ and each mapped by $\varphi$ in a distance-preserving way so that there are at least three distinct isometric mappings among these subsets. Necessarily, each subset must be a wedge. The points within $N$ that belong to the rays between pairs of wedges are crease points, and the points within $N$ that do not belong to these rays are unfolded points.
\end{itemize}
Then, as stated above, a local flat folding is a continuous function $\phi$ such that all points of the plane are unfolded points, crease points, and vertex points. We add one more restriction: we consider only local flat foldings that have at least one vertex point. We do not require the number of vertex points to be finite.
\end{definition}

As a simple example, consider the function $\varphi: (x,y)\mapsto (f(x),f(y))$
where $f(x)=|(x\operatorname{mod}2)-1|$.
Here $f$ is a continuous function that maps the intervals $[2i,2i+1]$ to $[0,1]$ in reverse order,
and that maps the intervals $[2i+1,2i+2]$ to $[0,1]$ linearly.
$\varphi$ corresponds to a folding pattern in which we \emph{pleat} the plane along the integer-coordinate vertical lines (that is, we create a sequence of folds that alternates between mountain and valley folds, like an accordion; see \cite[p.~31]{Lan-ODS-12}), and then we pleat it again along the integer-coordinate horizontal lines, so that the whole plane is mapped to the unit square. Its folding pattern has vertex points at points of the plane where both coordinates are integers, crease points at points with one integer coordinate, and unfolded points everywhere else. That is, it is a drawing of the infinite square grid graph.

In general, the graph of a local flat folding is almost a graph drawing, in that its vertex points form a discrete set, connected in pairs by line segments consisting of crease points. For the grid example, it is a graph drawing. However, for other local flat foldings, some of the crease points may belong to semi-infinite rays rather than forming bounded line segments. To make a graph that also includes these rays as edges, we add a special vertex $\infty$ that is not represented by any geometric point, and we treat this special vertex as an endpoint of each ray of crease points.

\begin{definition}
We define the \emph{graph of a local flat folding} $\varphi$ to be a graph $G$ that has a vertex for each vertex point of $\varphi$ and (if $\varphi$ includes any infinite rays of crease points) another special vertex $\infty$. Two vertex points form adjacent vertices in $G$ when the line segment between them consists only of crease points. A vertex point $p$ and the special vertex $\infty$ are adjacent when there exists a ray with apex $p$ consisting only (other than at its apex) of crease points. This graph may have multiple adjacencies between $\infty$ and other vertices (for instance, it will do so in any one-vertex flat folding pattern) but it can have at most one edge between any two vertex points.
\end{definition}

The folding pattern provides a topological planar embedding for the whole graph $G$, and a geometric straight-line planar embedding for all vertices except $\infty$. As usual, we call the maximal regions of the plane that are disjoint from the vertices and edges of the embedding (the vertex and crease points of $\varphi$) the \emph{faces} of the embedding. These are possibly-unbounded polygonal regions, the connected components of the unfolded points of $\varphi$.
Because the action of $\varphi$ on each face of the graph is determined from its action on adjacent faces, the embedding of $G$ completely determines the mapping of $\varphi$, up to a congruence transformation of the whole plane.

For our realizations of trees, we will use a slightly different graph, that can be derived from the graph of the folding. (It will not be interesting to study the graph connectivity of this graph, because it will have many degree-one vertices.)

\begin{definition}
We define the \emph{truncated graph of a local flat folding} to be the graph formed in either of the following two equivalent ways:
\begin{itemize}
\item From the graph of the folding, subdivide each edge incident to $\infty$, and then delete vertex $\infty$.
\item Form a graph with a vertex for each vertex point of the folding and another vertex for each ray of crease points of the folding. Connect two vertex points by an edge if the line segment between them consists only of crease points. Add an edge for each ray of crease points, connecting the vertex point at the apex of the ray to the additional vertex for the same ray.
\end{itemize}
\end{definition}

Truncated graphs of local flat foldings can also be interpreted as the type of graph drawn in \autoref{fig:bird-base} for a folding pattern on a sheet of square paper with the additional property that the creases reaching the boundary form diverging rays. However, the folding pattern in \autoref{fig:bird-base} has creases that instead meet at the boundary, and it is also possible to form converging pairs of rays. Therefore the type of graph shown in the figure, of a folding pattern on a bounded square of paper, is somewhat more general. However, for the purposes for which we use truncated graphs (realization of trees), a less general model is better, as any realization in such a model will also be a realization for the more general model.

It remains to define a mathematical model of foldings as global structures, accounting for how paper can fold in three dimensions and how some parts of the paper can block other parts of paper from passing through them (disallowing self-intersections). It is possible to model precisely the above-below relation of the faces of $\varphi$, and the nesting structure of the folding at the creases of $\varphi$; see, for instance,~\cite{AbeDemDem-JoCG-18} for a similar model of lower-dimensional flat-folded structures.
However, we will forgo the complexity of such a model in favor of the following simpler topological approach.

\begin{definition}
A \emph{global flat folding} is a local flat folding $\varphi$ with the additional property that,
for every $\epsilon>0$, there exists a topological embedding $\varphi_\epsilon:\R^2\to\R^3$
(without self-intersections) such that composing $\varphi_\epsilon$ with the coordinatewise vertical projection from $\R^3$ to $\R^2$ produces a mapping that, for every point $p$, is within distance $\epsilon$ of the mapping given by $\varphi$.
\end{definition}

Intuitively, a global flat folding is a local flat-folding that, for every $\epsilon>0$, is $\epsilon$-close to a topological embedding of the plane into three-dimensional space.

\subsection{Single-vertex restrictions}

The geometry of single-vertex folding patterns, such as the one in \autoref{fig:one-vertex}, is characterized by Maekawa's theorem and Kawasaki's theorem~\cite{Jus-BO-86,Hul-SEC-94,Huf-TC-76,HusHus-TGO-79,Rob-PRSE-77,Kaw-OST-89,Mur-BJCA-66a,Mur-BJCA-66b}. These apply as well to each vertex of a multi-vertex folding pattern.

\begin{theorem}[Maekawa's theorem for single-vertex folding patterns without mountain-valley assignments]
Each vertex point of a folding pattern must be incident to an even number of creases.
\end{theorem}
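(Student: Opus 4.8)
The plan is to argue locally around a single vertex point $p$, using an orientation (parity) argument; since we are working with local flat foldings that carry no mountain--valley labels, we cannot simply invoke the usual $M-V=\pm 2$ form of Maekawa's theorem. Fix a vertex point $p$ and a neighborhood $N$ as in the definition, covered by finitely many wedges with apex $p$. First I would normalize the cover: arranging the wedges in cyclic order around $p$ and merging any two angularly-consecutive wedges on which $\varphi$ acts by the same isometry, we may assume the wedges $W_1,\dots,W_k$ (indices read mod $k$) carry isometries $\sigma_1,\dots,\sigma_k$ of the plane with $\sigma_i\neq\sigma_{i+1}$ for every $i$. With this normalization the interior of each wedge consists of unfolded points and the rays separating consecutive wedges are exactly the crease points of $\varphi$ inside $N$, so the number of creases incident to $p$ equals $k$; it therefore suffices to prove that $k$ is even.

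The key step is to describe the transition across each separating ray. Let $\ell_i$ be the line through $p$ containing the ray between $W_i$ and $W_{i+1}$. Continuity of $\varphi$ forces $\sigma_i$ and $\sigma_{i+1}$ to agree along $\ell_i$, and two distinct plane isometries that agree along a line must be mirror images across that line; hence $\sigma_{i+1}=m_i\circ\sigma_i$, where $m_i$ is the reflection of $\R^2$ across the common image line $\sigma_i(\ell_i)=\sigma_{i+1}(\ell_i)$. (This is essentially the content of the definition of a crease point; the only thing the merging step buys us is that each transition is a genuine reflection rather than the identity, which is why the separating rays are all true creases.) Composing these transitions once around the vertex returns us to the starting wedge, so $\sigma_1=m_k\circ m_{k-1}\circ\cdots\circ m_1\circ\sigma_1$, and therefore $m_k\circ\cdots\circ m_1=\mathrm{id}$.

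Finally I would read off the parity. Each $m_i$ is a reflection, hence orientation-reversing (the determinant of its linear part is $-1$), while the identity is orientation-preserving; a composition of $k$ orientation-reversing isometries is orientation-preserving if and only if $k$ is even. Thus $k$ is even, which is the claim. As a consistency check, this also shows $k\neq 3$, matching the requirement in the definition that a vertex point carry at least three distinct isometries: a product of three reflections can never be the identity, so in fact $k\geq 4$.

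I expect the only real friction to be at the level of the definitions rather than in the mathematics: making the merging step precise enough that ``the number of creases incident to $p$'' is exactly $k$, and checking that the continuity clause in the definition of crease and vertex points really does force the relation $\sigma_{i+1}=m_i\circ\sigma_i$ with $m_i$ an honest (orientation-reversing) reflection. Once that bookkeeping is settled, the orientation argument closing the proof is immediate.
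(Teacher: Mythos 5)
Your proof is correct and is essentially the paper's argument made formal: the paper simply observes that each crease flips the paper between orientation-preserving and orientation-reversing regions, so going once around the vertex the orientation must return to itself, forcing an even number of creases. Your composition-of-reflections identity $m_k\circ\cdots\circ m_1=\mathrm{id}$ is exactly that parity observation written out carefully, so the two proofs coincide in substance.
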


This follows easily from the observation that, at each crease, the paper alternates between having its top side up (a region within which $\varphi$ is an orientation-preserving isometric mapping) and having its bottom side up (a region within which $\varphi$ is an orientation-reversing isometric mapping).

\begin{figure}[t]
\centering\includegraphics[width=0.5\textwidth]{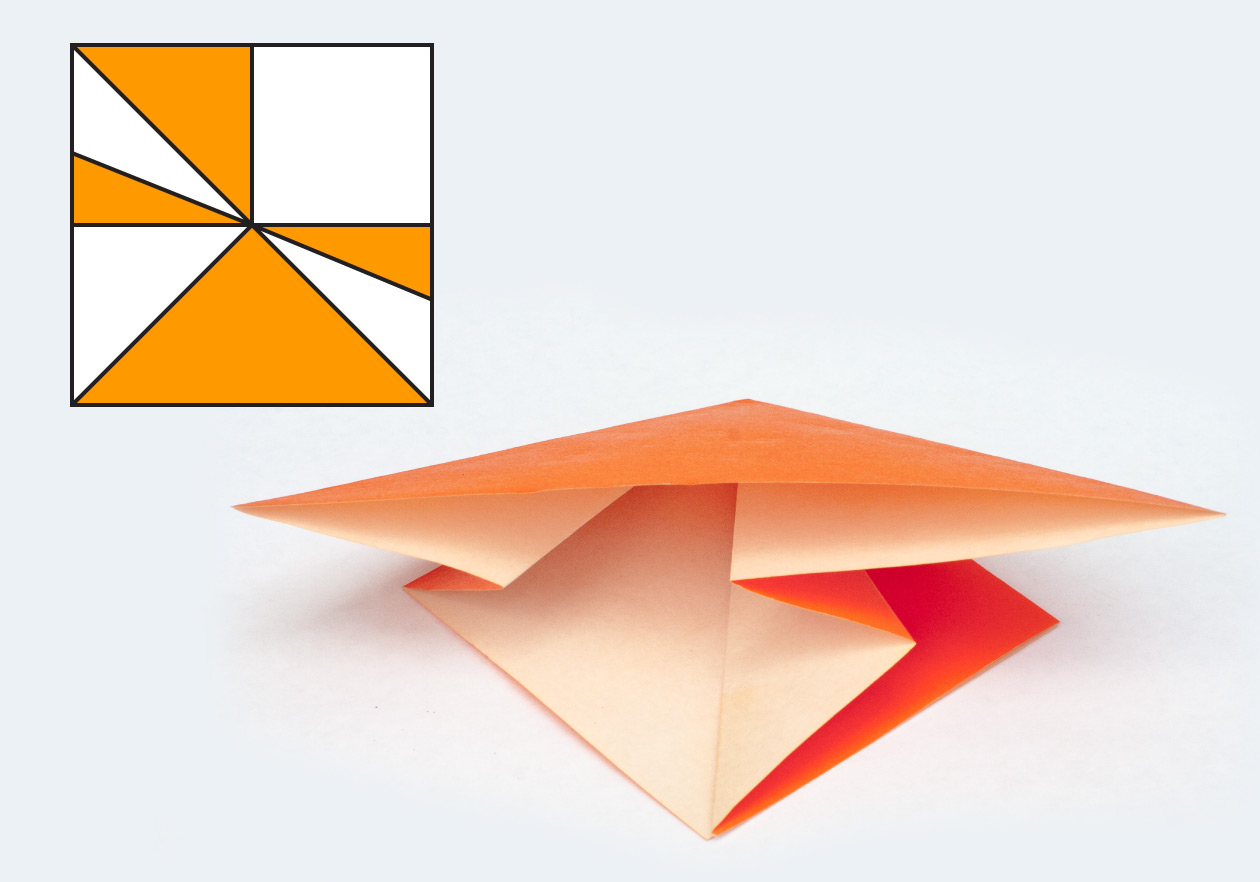}
\caption{A single-vertex flat folding and its pattern, demonstrating Maekawa's theorem (the number of folds is even) and Kawasaki's theorem (the face-up orange total angle equals the bottom-up white total angle). Image by the author for Wikipedia, 2011.}
\label{fig:one-vertex}
\end{figure}

\begin{theorem}[Kawasaki's theorem]
At each vertex point of a folding pattern, the alternating sum of wedge angles totals to zero.
\end{theorem}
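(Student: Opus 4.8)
The plan is to track the image of a small circle around the vertex point $p$: I will show that, as one traverses this circle once, the image under $\varphi$ winds around $\varphi(p)$ by an angle equal to the alternating sum of the wedge angles, while this winding is simultaneously forced to be an integer multiple of $2\pi$; since the wedge angles are positive and sum to $2\pi$, the only possibility is $0$.

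In detail, I would first fix $r>0$ small enough that the closed disk $D$ of radius $r$ about $p$ lies in the neighborhood on which $\varphi$ has the wedge structure from the definition of a vertex point. Then the circle $C=\partial D$ is cut by the crease rays into arcs $A_1,\dots,A_{2k}$ in cyclic order, where the count is even by Maekawa's theorem, each $A_i$ lies in a wedge of angle $\alpha_i$, and $A_i$ subtends angle $\alpha_i$ at $p$. On each wedge $\varphi$ is an isometry carrying $p$ to the single point $\varphi(p)$ (single because $\varphi$ is continuous and the wedges share $p$), so every point of $C$ maps to a point at distance exactly $r$ from $\varphi(p)$; hence $\varphi(C)$ lies on the circle $C'$ of radius $r$ about $\varphi(p)$, and $\varphi$ sends each arc $A_i$ to an arc of $C'$ subtending angle $\alpha_i$ at $\varphi(p)$.

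Next I would orient $C$ (say counterclockwise), parametrize it by $\gamma$, and follow the angular coordinate $\theta(t)$ of $\varphi(\gamma(t))$ around $\varphi(p)$. Because the two isometries on the two sides of any crease are reflections of one another, the orientation of $\varphi$ flips from each wedge to the next, hence alternates around the cycle; relabeling if needed, $\theta$ increases by $\alpha_i$ across each odd-indexed arc and decreases by $\alpha_i$ across each even-indexed arc, so traversing $C$ once changes $\theta$ by $\alpha_1-\alpha_2+\alpha_3-\dots-\alpha_{2k}$. But $\gamma$ is a closed loop, so $\varphi\circ\gamma$ is a closed loop on $C'$ and $\theta$ returns to its starting value; therefore this alternating sum is an integer multiple of $2\pi$. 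Finally, since the $\alpha_i$ are positive and sum to $2\pi$, the sum of the odd-indexed wedge angles and the sum of the even-indexed wedge angles each lie strictly between $0$ and $2\pi$, so their difference lies strictly in $(-2\pi,2\pi)$, and the only multiple of $2\pi$ there is $0$.

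The two points I would be most careful to spell out are that $\varphi(C)$ genuinely lies on a \emph{single} circle $C'$ (this is exactly where continuity of $\varphi$ at $p$ is used), and the closing step that pins the alternating sum to $0$ rather than merely to a multiple of $2\pi$ — a bare ``the loop closes up'' argument is insufficient without the bound supplied by $\sum_i\alpha_i=2\pi$. (An alternative I would mention but not develop: write $\varphi$ near $p$ wedge by wedge, with the isometry on wedge $i{+}1$ obtained from that on wedge $i$ by post-composing with reflection across the intervening crease line; composing all $2k$ of these reflections returns the identity since $\varphi$ is single-valued on the shared rays, and the standard fact that a product of reflections in lines through a common point is a rotation whose angle is twice their alternating angular positions yields the same congruence $\alpha_1-\alpha_2+\dots-\alpha_{2k}\equiv 0\pmod{2\pi}$, hence $=0$ by the same bound.)
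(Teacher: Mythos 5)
Your proof is correct, and it takes a genuinely different route from the paper's. The paper disposes of Kawasaki's theorem with a layer-counting sketch: near the image of the vertex in the folded state, each point is covered by equal numbers of upward-facing (orientation-preserving) and downward-facing (orientation-reversing) regions, so the total area of up-facing paper equals that of down-facing paper, and dividing by $r^2/2$ gives the alternating angle sum equal to zero. You instead run the classical winding argument on the image of a small circle about $p$: the image lies on a single circle about $\varphi(p)$, the angular coordinate advances by $+\alpha_i$ on orientation-preserving wedges and $-\alpha_i$ on orientation-reversing ones, closure of the loop forces the alternating sum to be a multiple of $2\pi$, and the bound $\sum_i\alpha_i=2\pi$ with all $\alpha_i>0$ pins it to zero. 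Your version is more self-contained for the paper's notion of a \emph{local} flat folding: it uses only continuity and the fact (from the crease-point definition) that adjacent wedges carry isometries that are reflections of one another, whereas the paper's covering-number claim implicitly appeals to a consistent layering near the folded vertex (essentially a local degree argument), which is more natural for global foldings. The paper's sketch, in exchange, is shorter and gives the memorable physical intuition that the amount of face-up paper must equal the amount of face-down paper. Your two flagged subtleties --- that $\varphi(C)$ lies on a single circle, and that the multiple of $2\pi$ must actually be $0$ --- are exactly the right points to make explicit; the parenthetical alternative via composing the $2k$ reflections is also sound and amounts to the same computation.
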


This again follows from the fact that, near the vertex in the flat-folded state of the pattern,
each point is covered by equal numbers of upward-facing and downward-facing regions, so
the total amount of upward-facing paper must equal the amount of downward-facing paper.

\begin{corollary}
Each wedge of a vertex point of a flat folding has angle strictly less than $\pi$. Therefore, each face of a flat folding pattern is a (possibly unbounded) convex polygon.
\end{corollary}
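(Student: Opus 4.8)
The plan is to read off the corollary from the two preceding single‑vertex theorems. Fix a vertex point $p$ of the flat folding and let $N$ be the neighborhood supplied by the definition, covered by finitely many wedges meeting at $p$. By Maekawa's theorem the number of creases incident to $p$ is even, and since the definition of a vertex point requires at least three wedges, there are in fact at least four of them; I would write the wedge angles, in cyclic order around $p$, as $\alpha_1,\alpha_2,\dots,\alpha_{2n}$ with $n\ge 2$. Each $\alpha_i$ is strictly positive because a wedge has nonempty interior, and since the wedges tile $N$ around $p$ we have $\sum_{i=1}^{2n}\alpha_i=2\pi$.

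Next I would invoke Kawasaki's theorem, which asserts $\alpha_1-\alpha_2+\alpha_3-\cdots-\alpha_{2n}=0$, i.e.\ $\sum_{i\ \mathrm{odd}}\alpha_i=\sum_{i\ \mathrm{even}}\alpha_i$; combined with the total angle $2\pi$, each of these two partial sums is exactly $\pi$. Now take any single wedge angle $\alpha_j$. It is one summand of whichever of the two partial sums contains it, and that sum equals $\pi$ while consisting of $n\ge 2$ strictly positive terms; hence $\alpha_j$ equals $\pi$ minus a strictly positive quantity, so $\alpha_j<\pi$. This proves the first assertion of the corollary.

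For the claim about faces, recall that a face is a connected component of the set of unfolded points of $\varphi$, and that its boundary consists of crease points, which away from the (isolated) vertex points lie on straight fold lines; so the boundary is a union of line segments and rays meeting only at vertex points. At each vertex point $p$ on the boundary of a face $F$, the part of $N$ lying in $F$ is precisely one of the wedges at $p$, so the interior angle of $F$ at $p$ is some $\alpha_j<\pi$, while along the relative interior of each bounding crease the boundary of $F$ is straight. A polygonal region (possibly unbounded, when some of its bounding creases are rays) all of whose vertex angles are less than $\pi$ is convex, which gives the corollary.

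The numerical heart of the argument is immediate once Maekawa's and Kawasaki's theorems are available; the one step I expect to need care is the passage from the local angle bound at each corner to convexity of the whole face, since a bound on interior angles only forces convexity for a \emph{simple} polygonal region. I would address this by first arguing that each face is a simple (possibly unbounded) polygon — its boundary, being locally a single fold line except at isolated vertex points, is topologically a $1$-manifold, so the face is bounded by a single simple polygonal curve — after which the bound $\alpha_j<\pi$ at every corner, together with straightness along the edges, yields convexity.
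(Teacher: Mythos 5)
The paper states this corollary without any written proof, treating it as immediate from Maekawa's and Kawasaki's theorems, and your argument for the first assertion is exactly the intended one: Maekawa forces the degree at $p$ to be even and the vertex-point definition forces it to be at least three, hence at least four wedges; Kawasaki splits the total angle $2\pi$ into two alternating sums each equal to $\pi$, and each sum contains at least two strictly positive terms, so every wedge angle is strictly below $\pi$. That part is complete and correct.

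Your treatment of the second assertion is more careful than the paper's (which offers nothing), and you correctly identify the real issue: an angle bound at the corners yields convexity only for a region bounded by a single simple polygonal curve. The one step that does not quite go through as written is the inference ``the boundary is locally a $1$-manifold, so the face is bounded by a single simple polygonal curve.'' Being a $1$-manifold only tells you each boundary component is a simple closed curve (or a bi-infinite polygonal line); it does not by itself exclude a face with several boundary components, e.g.\ an annular face. You can close this gap cheaply: any inner boundary component (a hole) is a simple closed polygonal curve and therefore has a corner that is convex as seen from inside the hole; at that corner the face, lying \emph{outside} the hole, would have interior angle exceeding $\pi$, contradicting the bound you just proved. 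Alternatively, one can bypass the whole issue by invoking the Tietze--Nakajima theorem that a connected, locally convex closed subset of the plane is convex, since your angle bound plus straightness along crease interiors is precisely local convexity of the face closure. With either repair the proof is sound and matches what the paper implicitly intends.
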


\section{Realization of trees}
\label{sec:trees}

Let $T$ be any plane tree. Then by Maekawa's theorem, if $T$ is to be realized as the truncated graph of a local flat folding, its internal vertices must have even degree greater than two. Our purpose in this section is to prove that this condition is necessary as well as sufficient.

\begin{figure}[t]
\centering\includegraphics[width=0.4\textwidth]{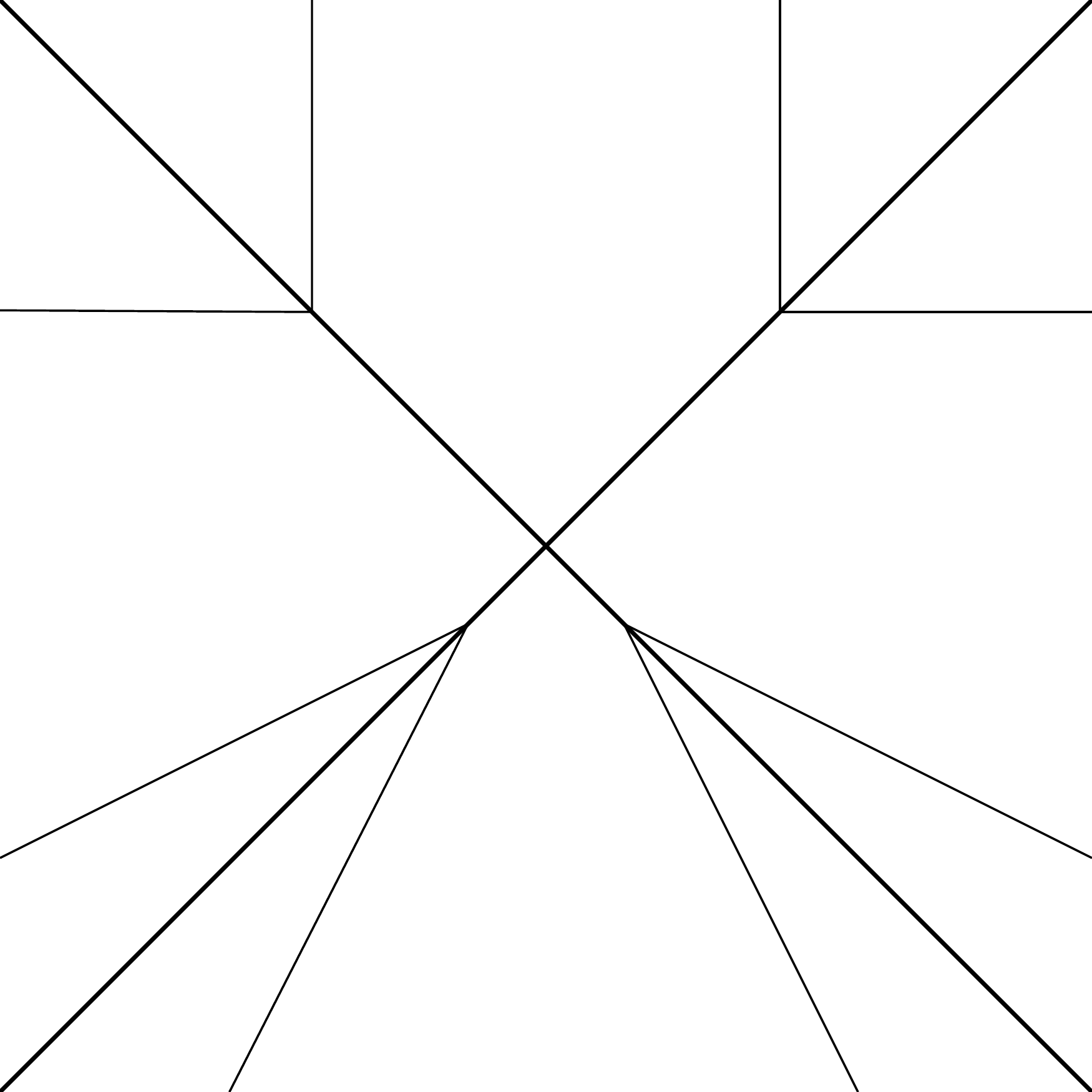}
\caption{A tree folding pattern that can be locally flat folded, but not globally flat folded.}
\label{fig:tree-counterexample}
\end{figure}

We are interested here in global flat foldings, not just local flat foldings, and for this reason some care must be taken. It is not sufficient merely to embed $T$ as a graph in the plane, with its leaf edges drawn as rays, and with each internal vertex meeting the angle sum condition of Kawasaki's theorem. \autoref{fig:tree-counterexample} depicts a counterexample.
It obeys Kawasaki's theorem, and can be locally flat folded, but not globally flat folded.
The four heavier diagonal lines of the figure can be flat folded in only one way up to combinatorial equivalence. Their folding is obtained by first folding along one diagonal line, and then along the other. The four creases of this fold are then modified by subsidiary folds that are each individually possible. But one of the four heavier creases must be nested tightly within another one. The two subsidiary creases of these two nested creases are arranged in such a way that, no matter which crease is nested within the other, the subsidiary crease of one will be blocked by the paper from the other nested crease. (Try it!)

To evade this problem, we seek a stronger type of realization, one in which each crease is ``protected'' by a wedge surrounding it, within which we can add modifications (such as the subsidiary wedges of \autoref{fig:tree-counterexample}) without interfering with other parts of the folding.

\begin{theorem}
Let $T$ be any finite tree with all internal vertices having even degree greater than two. Then $T$ can be realized as the truncated graph of a global flat folding.
\end{theorem}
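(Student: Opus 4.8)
The plan is to construct the folding by induction on the number of internal vertices of $T$, starting from a single-vertex folding and adding one internal vertex at a time by a surgery confined to a small protected wedge of paper, in such a way that global flat-foldability is preserved at each step.

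First I would record a combinatorial reduction. Call a \emph{leaf promotion} the operation that takes a plane tree, picks a leaf $\ell$ adjacent to an internal vertex $u$, and replaces $\ell$ by a new internal vertex $w$ of some even degree $2k\ge 4$, joined to $u$ and to $2k-1$ new leaves. Every finite tree $T$ with all internal degrees even and at least four can be obtained from a star $K_{1,2m}$ (some $m\ge2$) by a sequence of leaf promotions: the internal vertices of $T$ induce a subtree, which we grow vertex by vertex from a fixed root $\rho$, and when a new internal vertex $w$ with parent $u$ is to be added, the slot of $u$ where $w$ belongs is currently a leaf (since the degree of $u$ never changes during the process and $w$ has not yet been attached), so promoting that leaf realizes the step; the starting star is $K_{1,\deg_T(\rho)}$. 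This reduces the theorem to: (i) realizing $K_{1,2m}$, and (ii) realizing a leaf promotion on any global flat folding that already realizes a tree $T'$. For (i), any single-vertex pattern of degree $2m$ obeying Kawasaki's theorem — for instance the one with $2m$ equal wedges of angle $\pi/m$ — works: every single-vertex Kawasaki pattern is (globally) flat-foldable by the classical single-vertex theory cited above, and its truncated graph is $K_{1,2m}$.

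For the inductive step, let $\varphi$ be a global flat folding realizing $T'$ and let $r$ be the crease ray, with apex $p$, corresponding to the leaf $\ell$ of $u$ to be promoted. I would carry the stronger hypothesis that the tail of each leaf ray is surrounded by a thin \emph{protecting wedge}: a wedge of unfolded paper, with apex on the ray and opening toward infinity, contained in the union of the two faces flanking the ray, and whose image under $\varphi$ lies, in every approximating embedding $\varphi_\epsilon$, in a (possibly very thin) open region of $\R^3$ disjoint from the rest of the folded paper. Inside the protecting wedge $W$ of $r$ I would pick a point $w$ beyond the apex of $W$ and redraw $\varphi$ only within $W$, making $w$ a vertex point whose $2k$ creases are the segment $wp$ (running back along $r$) together with $2k-1$ new crease rays emanating from $w$ at tiny angles to the direction of $r$, chosen so that the two wedges at $w$ flanking $wp$ each have angle just under $\pi$ and the $2k-2$ wedges among the new rays are tiny and equal; the new creases carve $2k-2$ thin new faces out of the two faces that previously flanked $r$. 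A short computation, which I would not belabor, verifies Kawasaki's alternating-sum condition and the angle bound at $w$; since all new creases are nearly parallel to $r$ they stay inside $W$, the two flanking faces are only trimmed, and $\varphi$ is unchanged outside $W$. Each new ray then receives its own thin protecting sub-wedge of $W$ with apex $w$, re-establishing the hypothesis.

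The delicate part — and the step I expect to be the main obstacle — is verifying that the surgery preserves \emph{global} flat-foldability, i.e.\ that the embeddings $\varphi_\epsilon$ still exist; \autoref{fig:tree-counterexample} shows this cannot be taken for granted. This is exactly the role of the protecting wedge. Given a target $\epsilon$, one takes an approximating embedding of $T'$ that is $\delta$-close to flat for some $\delta\ll\epsilon$, so that the free region around the tail of $r$ has room of order $\delta$, and then modifies that embedding only inside the lift $\varphi_\epsilon(W)$, replacing the single folded sheet there by the pleat-like layered stack of the $2k-1$ nearly-parallel new creases, nested within the same slot; because the new creases are nearly parallel to $r$ the stack stays inside the free region all the way out to infinity, so no new self-intersection is created, and the result is $O(\delta)$-close to flat, hence an admissible $\varphi_\epsilon$ for $T$. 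I expect most of the technical work to lie in pinning down the ``protecting wedge / free region'' invariant precisely enough to be simultaneously maintainable under the surgery and strong enough to license the local three-dimensional modification.
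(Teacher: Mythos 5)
Your overall strategy---induction on the number of internal vertices with a ``protecting wedge'' invariant around every leaf ray, and adding a new internal vertex by a pleat confined entirely to that wedge and to the free slot it occupies in space---is exactly the paper's approach, and your inductive step matches the paper's in substance (the paper spaces the new rays evenly across the protecting wedge rather than clustering them at tiny angles around $r$, but either choice keeps the new pleat nested in the free region, and in both versions every new ray joins two vertically consecutive layers of the local pleat, so the sub-wedge invariant is re-established). The genuine gap is in your base case. You assert that any Kawasaki-compliant single-vertex pattern realizes $K_{1,2m}$, which is true, but the induction needs more: it needs the protecting-wedge invariant to hold for \emph{every} ray of the star, and you never verify this. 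Worse, the example you offer (all $2m$ wedges equal, of angle $\pi/m$) provably violates it: in any flat-folded state of that pattern every layer covers the whole image wedge, so at each interior point of the image all $2m$ layers appear in one vertical order; the creases pair the layers into a $2m$-cycle, but a linear order on $2m$ elements has only $2m-1$ adjacent pairs, so at least one crease must join two layers that are not vertically consecutive, and the protecting wedge of that ray has other paper between its two halves. Since the reduction must be able to promote an arbitrary leaf of the starting star, that one unprotected ray breaks the induction.

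The repair is precisely the paper's unequal-angle base case: with $\theta=\pi/(d+1)$, take two adjacent wedges of angle $3\theta$ and $d-2$ wedges of angle $2\theta$, and fold with the two larger wedges outermost and the rest pleated between them. The two large wedges then protrude by an angle $\theta$ beyond the pleated stack, so the closing crease of the cycle of wedges---which necessarily joins the top and bottom layers---lands in a part of the image covered by no other paper, while every other crease joins consecutive layers. With that base case (or any other establishing the invariant for all $2m$ rays), your argument goes through and coincides with the paper's proof.
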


\begin{proof}
We use induction on the number of internal nodes of $T$ to prove a stronger statement: that $T$ can be realized in such a way that each ray $r$ of $T$ is associated with a wedge $W_r$, satisfying the following properties:
\begin{itemize}
\item Ray $r$ and wedge $W_r$ have the same apex, and $r$ is the median ray of its wedge.
\item Each two rays have interior-disjoint wedges. Each edge of $T$ that is not a ray is disjoint from all of the wedges.
\item There exists a three-dimensional folded state such that the two halves of each wedge $W_r$ are placed touching each other, with no other paper between them.
\end{itemize}
The third property above is phrased informally, so let us relate it to our earlier topological definition of a global flat folding. Recall that, in order to formalize the notion of a ``three-dimensional folded state'' we really have a parameterized family of three-dimensional embeddings. That is, we have both a folding map $\varphi:\R^2\to \R^2$ and, for each $\epsilon>0$, a topological embedding $\varphi_\epsilon:\R^2\to \R^3$ whose vertical projection to $\R^2$ is $\epsilon$-close to $\varphi$. We formalize the ``no other paper between them'' constraint, again up to $\epsilon$-closeness: for each point $p\in\R^2$ at a distance of $\epsilon$ or more from the boundary of $\varphi(W_r)$,
the preimage of $p$ (according to the vertical projection) in $\varphi_\epsilon(\R^2)$ should have two points from the two sides of $W_r$ consecutive with each other in the vertical ordering of the points.

\begin{figure}[t]
\centering\includegraphics[width=0.8\textwidth]{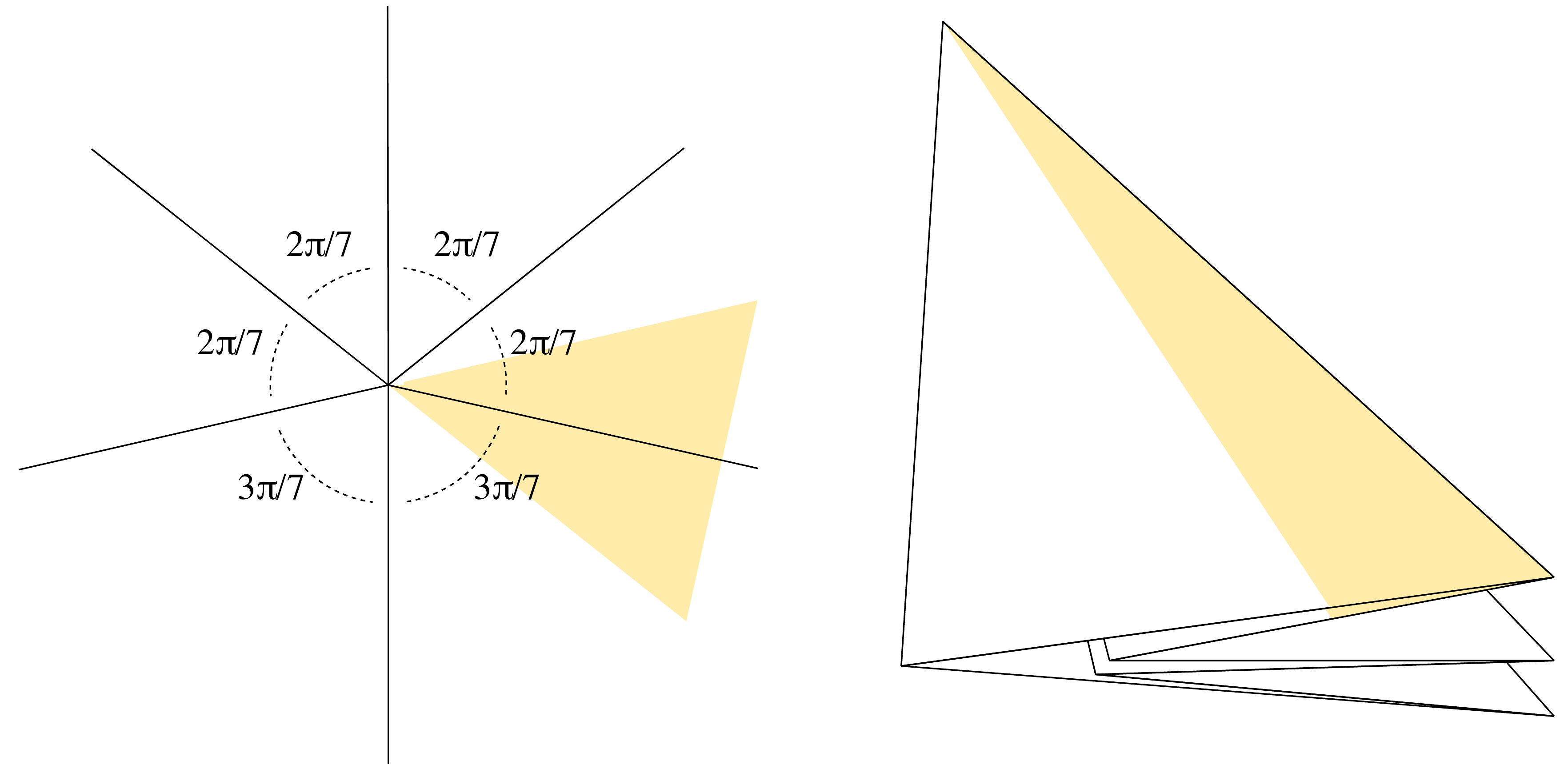}
\caption{The base case for realizing a one-internal-vertex tree (here with degree $d=6$), showing the wedge $W_r$ for one of the rays $r$ both in the folding pattern and in the folded state.}
\label{fig:tree-base}
\end{figure}

The base case of the induction is a tree $T$ with one internal node $v$ of even degree $d$ greater than four.
In this case, we let $\theta=\pi/(d+1)$. We draw $T$ as a set of $d$ rays, all meeting at a common point. We make two of the angles between consecutive rays of $T$ equal to $3\theta$, and all remaining angles equal to $2\theta$. For instance, when $d=7$, we get $\theta=\pi/7$ and six rays separated by angles of $3\pi/7,3\pi/7,2\pi/7,2\pi/7,2\pi/7,2\pi/7$.
We fold this in three dimensions by placing the two wider wedges on the top and bottom of the folded pattern, and pleating the remaining wedges between them.  For this fold, we make each wedge $W_r$ for a ray $r$ of the folding pattern be the wedge centered on that ray with opening angle $2\theta$. This opening angle is sufficient to make all the wedges interior-disjoint, and it is straightforward to verify that the 3d realization of this fold places no paper between the two halves of any wedge. This case is depicted in \autoref{fig:tree-base}.

Otherwise, if $T$ has more than one internal vertex, let $v$ be any internal vertex that has only a single non-leaf neighbor. (For instance, $v$ may be found by choosing any vertex $u$ arbitrarily and letting $v$ be an internal vertex that is maximally far from $u$.)
Let $T'$ be the tree formed from $T$ by removing the leaf neighbors of $v$, so that $v$ itself becomes a leaf. Then by the induction hypothesis, $T'$ can be realized by a global flat folding, with a ray $r$ that is associated with its leaf $v$ and that is surrounded by a wedge $W_r$, whose two halves touch each other without being blocked by other paper in the folding. Let $\theta$ denote the opening angle of wedge $W_r$. Suppose also that, in $T$, $v$ has degree $d$, and therefore it also has $d-1$ leaf children.

Then we modify the folding that represents $T'$ to form a folding representing $T$, as follows.
We place $v$ at an arbitrarily chosen point along $r$ (for instance, at the point a unit distance away from the apex of ray $r$). Then, we form $d-1$ creases, along $d-1$ rays  with $v$ as apex, to represent the $d-1$ leaf children of $r$.  We choose the angles of these rays so that they are separated from each other and from the two boundary rays of $W_r$ by an angle of $\theta/d$.
Finally, we assign each of these rays its own wedge, with $v$ as its apex and with opening angle $\theta/d$. (See \autoref{fig:add-to-tree}.)

\begin{figure}[t]
\centering\includegraphics[width=0.5\textwidth]{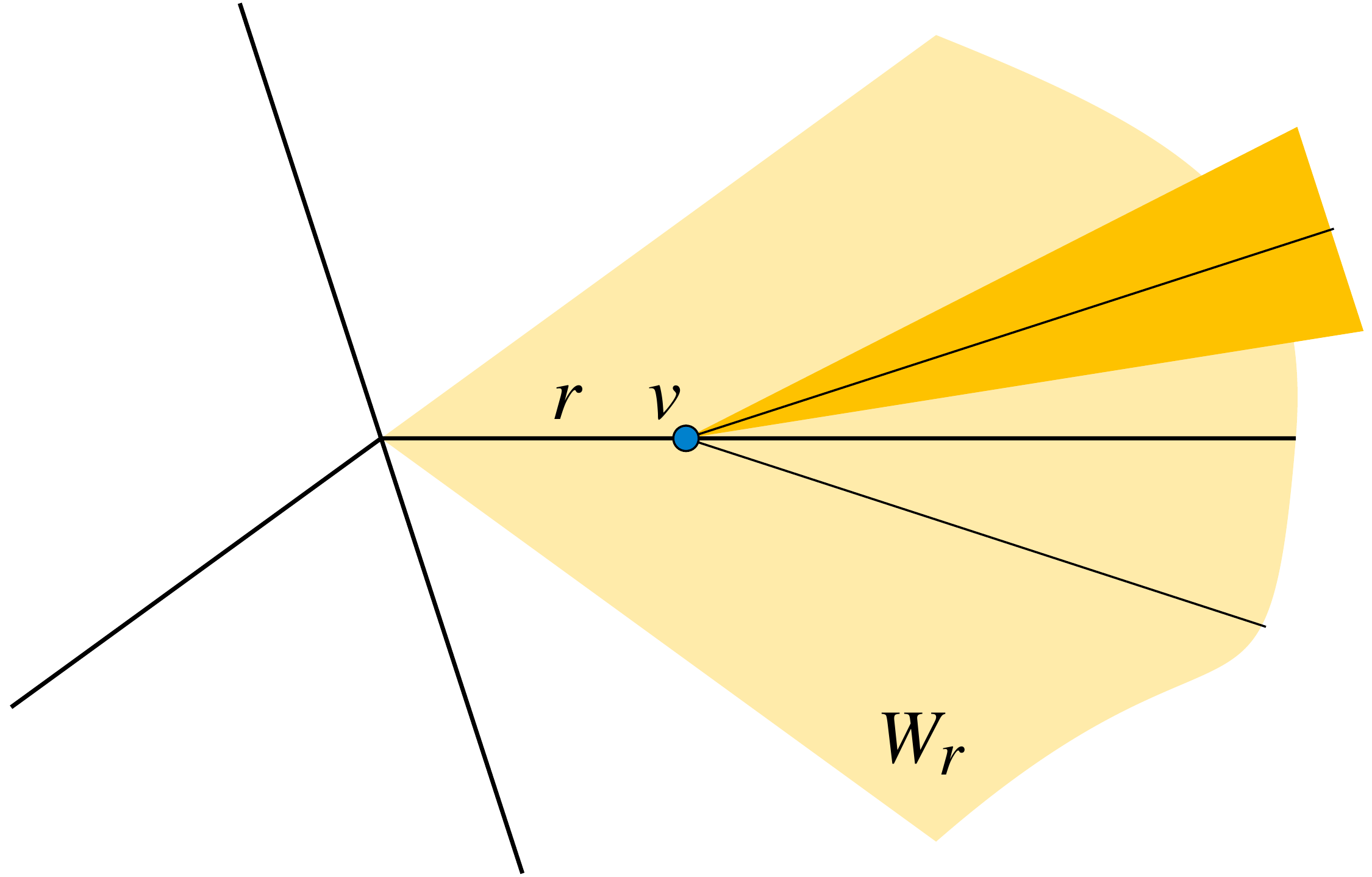}
\caption{Adding a vertex $v$ to the folding of $T'$ to create a folding for $T$. We choose the angles of the new rays incident to $v$ so that they and the two boundary rays of the outer wedge $W_r$ are equally spaced. The wedge surrounding each new ray has opening angle equal to the spacing of the rays. The crease pattern of the figure corresponds to a tree with two degree-four internal nodes.}
\label{fig:add-to-tree}
\end{figure}

The 3d folding of the crease pattern for $T'$ can also be modified in the same way to form a 3d folding for the crease pattern for $T$. At $v$, the rays and segments representing incident edges of $T$ form $d$ wedges, two of which have opening angle greater than $\pi$ and the rest of which have opening angle $\theta/d$. As before, we fold this part of the paper so that the two large wedges are outermost and the other wedges are pleated between them. The angles of the creased rays are chosen so that, after this pleat, the creases that are folded to become the closest to the boundary rays of $W_r$ (such as the middle ray of the figure) become parallel to these boundary rays.
Because of this, the folded state stays within the region of $\R^3$ previously occupied by the paper for wedge $W_r$, and the empty space between the two sides of that wedge, so it does not interfere with any other part of the global flat folding. Each of the wedges of opening angle $\theta/d$ surrounding the new rays of the folding has its two sides mapped directly above and below each other in the pleating, maintaining the invariant of the induction.
\end{proof}

We remark that, because the pleating pattern used for this realization does not ever tightly nest one crease inside another, it is possible to find a 3d realization that projects exactly to the two-dimensional local flat folding, rather than approaching it through $\epsilon$-approximations.

\section{Connectivity}

Although we have seen that truncated graphs of flat foldings may be trees (graphs that are not very highly connected), we now show that the full graph, including the special vertex $\infty$, is (when finite) always well connected. We assume throughout this section that the full graph has at least one finite vertex; otherwise, as a one-vertex graph, the full graph is trivially $k$-vertex-connected and $k$-edge-connected for all $k$.

\begin{lemma}
\label{lem:not-articulation}
Let $G$ be the graph of a local flat folding.
Then the special vertex $\infty$ is not an articulation vertex of $G$.
\end{lemma}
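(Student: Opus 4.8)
The plan is to show that every neighbor of $\infty$ in $G$ can be joined to every other neighbor of $\infty$ by a path avoiding $\infty$, and then to propagate this to all of $G$. If $\varphi$ has no infinite ray of crease points then $\infty\notin G$ and there is nothing to prove, so assume at least one crease ray exists; and since an articulation vertex of $G$ can only disconnect the component that contains it, I may assume $G$ is connected. The geometric input I would rely on is the corollary to Kawasaki's theorem: every face of $\varphi$ --- every connected component of its unfolded points --- is a convex polygon, possibly unbounded.

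The central step is a walk around one face. Fix a crease ray $r$ with apex $v$, and let $F$ be one of the (at most two) faces of $\varphi$ incident to $r$. Then $F$ is a convex polygon with all of $r$ on its boundary, so $F$ is unbounded, and if one starts at $v$ and follows the boundary of $F$ away from $r$, one traverses a finite convex chain of crease segments, each joining two vertex points, and arrives at a second crease ray $r'$ with some apex $w$ (allowing $w=v$, when $F$ is a single wedge at $v$). The crease segments so traversed form a path of $G$ from $v$ to $w$ that avoids $\infty$, so $v$ and $w$ lie in one component of $G-\infty$. Moreover $r'$ is the crease ray immediately following $r$, on the chosen side, in the cyclic order of crease rays around $\infty$ in the planar embedding of $G$: the unbounded faces of $\varphi$ are exactly the regions between cyclically consecutive crease rays, and $F$ is the one meeting $r$ there. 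Iterating this step around $\infty$ therefore comes back to $r$ and links the apexes of all the crease rays together, so all neighbors of $\infty$ in $G$ lie in a single component $C_0$ of $G-\infty$.

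To finish, let $u$ be any vertex of $G$ other than $\infty$, and take a shortest path from $u$ to $\infty$; being shortest it is simple, so it reaches $\infty$ only at its last vertex, whose predecessor is a neighbor of $\infty$ and so lies in $C_0$. Removing $\infty$ from the path leaves a path of $G-\infty$ from $u$ into $C_0$, so $u\in C_0$. Hence $G-\infty$ is the connected graph $C_0$, and $\infty$ is not an articulation vertex.

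The point needing the most care is the claim in the second paragraph that following $\partial F$ away from $r$ returns, after finitely many crease segments, to a genuine crease ray --- an edge of $G$ incident to $\infty$ --- rather than wandering through infinitely many vertices or stalling at a degenerate boundary. I would handle this by a short analysis of the possible shapes of an unbounded convex face: a half-plane, a slab, a wedge, or a wedge truncated by a finite convex chain. A slab face is incompatible with the connectedness of $G$ (its boundary has two components), and a face equal to the whole plane would make $G$ a tree, forcing a finite vertex of degree one, which no vertex point can be; in every remaining case $\partial F$ meets $\infty$ along a single arc, exactly as the walk requires. Finally, to know that the faces of the embedded graph $G$ are these convex faces and not unions of them, I would observe that no crease line can avoid every vertex point --- a lowest vertex point, whose creases span more than a half-plane, would emit a crease ray downward across such a line.
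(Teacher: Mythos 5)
Your proof is correct and rests on the same key fact as the paper's: by the corollary to Kawasaki's theorem every face is convex, so the boundary of each unbounded face is a single convex polygonal chain joining two consecutive crease rays, which links all neighbors of $\infty$ into one component of $G-\infty$. The paper packages this as a short contradiction (an infinite face cannot have two boundary components, each a convex chain ending in two rays spanning less than $\pi$, without those chains crossing), while you unfold the same idea into an explicit walk around $\infty$ plus a shortest-path step to reach the remaining vertices; the substance is the same.
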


\begin{proof}
If it were, some two components of $G-\infty$ would necessarily be separated by an infinite face of the folding pattern.
However, because all faces are convex each connected component of the boundary of an infinite face forms a convex polygonal chain, ending in two rays that span an angle (within the face) of less than $\pi$ with each other. It is not possible for two such chains to bound a single face without crossing each other, so the boundary of the face can have only one connected component.
\end{proof}

\begin{lemma}
\label{lem:nearby-rigid}
Let $u$ and $v$ be two vertex points of a local flat folding $\varphi$ that belong to the same face of $\varphi$ and let $d$ denote Euclidean distance.
Then $d(u,v)=d(\varphi(u),\varphi(v))$.
\end{lemma}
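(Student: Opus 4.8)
The plan is to show that on the common face $\varphi$ acts as a single rigid motion of the plane, and then pass to the two vertex points on its boundary using continuity. Let $F$ be the face that contains both $u$ and $v$ on its boundary. The set of unfolded points of $\varphi$ is open (any neighborhood witnessing that a point is unfolded consists only of unfolded points, since the restriction of a distance-preserving map to a subregion is still distance-preserving), and $F$, being a connected component of this open set, is itself open and connected; moreover $u,v\in\overline F$. The heart of the argument will be to produce a single plane isometry $\iota$ (a translation, rotation, or reflection) with $\varphi|_F=\iota|_F$. Once this is in hand, continuity of $\varphi$ and $\iota$ gives $\varphi(u)=\iota(u)$ and $\varphi(v)=\iota(v)$ by taking limits along a sequence in $F$ converging to $u$ (resp.\ $v$), and then $d(\varphi(u),\varphi(v))=d(\iota(u),\iota(v))=d(u,v)$ because $\iota$ preserves distances.

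To build $\iota$, fix a basepoint $p_0\in F$. Since $p_0$ is an unfolded point, there is a neighborhood of $p_0$ on which $\varphi$ agrees with some plane isometry $\iota$. Let $A$ be the set of points $p\in F$ that have a neighborhood (contained in $F$) on which $\varphi=\iota$; then $A$ is nonempty and open by construction. The key step is to check that $A$ is also closed in $F$. If $p\in F$ lies in the closure of $A$, then, $p$ being unfolded and $F$ open, there is a neighborhood $V\subseteq F$ of $p$ on which $\varphi$ agrees with some plane isometry $\iota'$; choosing any $q\in V\cap A$, the two isometries $\iota$ and $\iota'$ agree with $\varphi$, hence with each other, on a nonempty open neighborhood of $q$, and two isometries of the plane agreeing on a nonempty open set are equal. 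Thus $\varphi=\iota'=\iota$ throughout $V$, so $p\in A$. Since $F$ is connected and $A$ is a nonempty clopen subset, $A=F$, i.e.\ $\varphi|_F=\iota|_F$, and the conclusion follows as described above.

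I expect the global-rigidity step to be the only genuinely delicate point: a priori a face need not be simply connected, so one might worry that $\varphi$ could have nontrivial ``monodromy'' that prevents a single $\iota$ from working across all of $F$. The connected-and-clopen argument above avoids this entirely, because it fixes one isometry $\iota$ in advance and merely verifies that the locus where $\varphi$ equals this $\iota$ is open and closed, rather than continuing $\iota$ along paths. Everything else uses only the continuity of $\varphi$ and the elementary fact that plane isometries agreeing on an open set coincide.
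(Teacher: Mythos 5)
Your proof is correct, but it takes a genuinely different route from the paper's. The paper leans on the strict convexity of faces (the corollary to Kawasaki's theorem): since $u$ and $v$ are vertices of a strictly convex face, the segment $uv$ is either an edge of the folding graph (all crease points) or a chord through the face's interior (all unfolded points), and in either case $\varphi$ maps it to a congruent segment, which immediately gives $d(u,v)=d(\varphi(u),\varphi(v))$. You instead prove the stronger statement that $\varphi$ coincides with a single global isometry $\iota$ on the entire open connected face, via the standard clopen/connectedness argument (using that two plane isometries agreeing on a nonempty open set are equal), and then pass to the boundary points $u,v\in\overline F$ by continuity. Your version is self-contained and purely topological: it does not invoke convexity of faces, or even that faces are polygons, and it delivers more --- namely that $\varphi$ restricted to the closed face is a single rigid motion, so distances are preserved between \emph{any} two points of $\overline F$, not just vertex points. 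The paper's argument is shorter given that the convexity corollary has already been established, and its segment-based phrasing dovetails with how the lemma is later applied to triangles in the 4-connectivity proof. The only interpretive step you take --- reading ``belong to the same face'' as $u,v\in\overline F$ --- is the intended one, since vertex points cannot lie in the open face itself.
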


\begin{proof}
Because the faces of $\varphi$ are strictly convex, the line segment between $u$ and $v$ must either consist entirely of crease points (on an edge of the graph of the folding) or unfolded points (if $u$ and $v$ are not consecutive on their shared face). In either case this line segment is mapped to an equal-length line segment by $\varphi$.
\end{proof}

\begin{lemma}
\label{lem:no-3-separation}
Let $G$ be the finite graph of a local flat folding. Then removing up to three of the vertex points of the folding from $G$ cannot cause the remaining graph to become disconnected.
\end{lemma}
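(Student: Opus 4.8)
The plan is to argue by contradiction. Suppose a set $S$ with $|S|\le 3$, none of whose elements is $\infty$, disconnects $G$, and let $C$ be a connected component of $G-S$ that does not contain $\infty$. Since no vertex of $C$ is adjacent to $\infty$, no crease ray emanates from any vertex of $C$, so $C$ is drawn inside a bounded region of the plane and all edges incident to $C$ are bounded segments. By Maekawa's theorem each vertex of $C$ has even degree, and since a vertex point has at least three wedges, each of angle less than $\pi$ (by the corollary to Kawasaki's theorem), that degree is in fact at least $4$; moreover every neighbour of a vertex of $C$ lies in $C\cup S$.

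First I would produce a combinatorial separating structure. By the standard planarity argument there is a simple closed curve $\gamma$ passing through the (at most three) vertices of $S$, lying otherwise in the interiors of faces of the folding, with $C$ strictly in its bounded interior $R$ and with $\infty$ and all of $G-S-C$ in its exterior. Using convexity of faces I can take $\gamma$, within each face it enters, to be essentially the straight chord between the two vertices of $S$ at which it enters and leaves, so that $\gamma$ is arbitrarily close to a triangle $s_1s_2s_3$ (possibly degenerate). I would then dispose of the degenerate cases: if $|S|\le 2$, if $\gamma$ uses only one vertex of $S$, or if the three points of $S$ are collinear, then $R$ collapses onto a segment and $C$ would have to consist of collinear vertex points; but edges are straight segments that may not pass through other vertex points, and each vertex of $C$ has degree $\ge 4$ with all neighbours among $C\cup S$ and $|S|\le 3$, so a short case check rules this out. (The impossibility of a single cut vertex is, exactly as in Lemma~\ref{lem:not-articulation}, immediate from the fact that a convex face cannot meet a vertex of its boundary twice.) So I may assume $\gamma$ is a genuine triangle.

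The heart of the argument is an angle count on $R$ that uses convexity. Every cell of $G$ inside $R$ is a convex polygon, so at each vertex point of $C$ in the interior of $R$ the incident wedge angles sum to $2\pi$: zero angular defect. A routine Euler-characteristic computation (a discrete Gauss--Bonnet: decompose into convex cells and sum $(\text{sides}-2)\pi$) then shows that for any simple cycle $\Gamma$ in $G$ bounding a sub-disk $D$ of $R$, the interior angles of $D$ at the $|\Gamma|$ vertices of $\Gamma$ sum to exactly $(|\Gamma|-2)\pi$. Now I would take $\Gamma$ to ``hug'' $C$: route it along the $C$-side boundary chains of the few faces that $\gamma$ crosses, so that every vertex of $\Gamma$ lying in $C$ is the apex of exactly one face wedge that falls outside $D$. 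Its interior angle in $D$ is then $2\pi$ minus that wedge angle, hence strictly greater than $\pi$ -- a \emph{reflex} vertex. If such a $\Gamma$ has at most two of its vertices in $S$, then at least $|\Gamma|-2$ of its interior angles exceed $\pi$, making the angle sum strictly larger than $(|\Gamma|-2)\pi$ and contradicting the equality above.

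The main obstacle is exactly that last parenthetical: the naive hugging cycle passes through all three of $s_1,s_2,s_3$, and an interior angle of $D$ at a corner $s_i$ need not be reflex. Resolving this requires showing that near at least one corner $s_i$ one can detour $\Gamma$ through a face of the folding that lies in the interior angle at $s_i$ -- such faces exist because that angular sector is subdivided only by edges running to vertices of $C$ -- removing $s_i$ from $\Gamma$ while keeping (or iteratively restoring) the reflex property at the newly inserted vertices, until at most two vertices of $S$ remain on $\Gamma$. I also need to take care that $\Gamma$ can be chosen simple (handling vertices shared by two boundary chains, which could be the one place the reflex bookkeeping needs extra attention), and to handle the possibility that the faces incident to $C$ surround further components of $G-S$, using the outermost enclosing cycle. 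These are the points where I expect the real work to lie; the convexity/angle-count core is robust and is what pins the threshold at $3$ (equivalently, $4$-connectivity).
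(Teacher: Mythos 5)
Your overall frame (a separating set $S$ of at most three vertex points, a separating closed curve through $S$ whose arcs lie in faces, and a case split on degenerate configurations) matches the paper's setup, and your degenerate cases are handled essentially as in the paper. But the heart of your argument --- the discrete Gauss--Bonnet count on a cycle $\Gamma$ hugging the cut-off component --- is a genuinely different mechanism from the paper's, and the gap you flag at the end is not a technicality: it is fatal to this approach as stated. The tools your angle count actually uses are only that every face is convex (each wedge angle $<\pi$), that every vertex point has even degree at least $4$, and that the angle sum at an interior vertex is $2\pi$. These properties do not suffice to rule out a $3$-separation. Consider an antiprism-like convex subdivision: a triangle $s_1s_2s_3$ with a rotated triangle $x_1x_2x_3$ inside it, each $x_i$ joined to two of the $s_j$, giving seven convex triangular faces, every $x_i$ of degree exactly $4$, and every wedge less than $\pi$. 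Here $\{s_1,s_2,s_3\}$ separates $\{x_1,x_2,x_3\}$, yet every cycle enclosing all three $x_i$ with all of its $C$-vertices reflex passes through all three $s_i$, and the resulting angle identity $(|\Gamma|-2)\pi$ is satisfied with room to spare (three reflex angles at the $x_i$ plus three small angles at the $s_i$ sum to $4\pi$ without contradiction). Cycles using at most two $s_i$ exist but fail to have all their $C$-vertices reflex. So no choice of $\Gamma$ yields your contradiction, and the detour you hope to construct near a corner $s_i$ provably cannot exist in this configuration.

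What actually rules such configurations out --- and what your proposal never invokes --- is the metric rigidity of the folding map, which is how the paper pins the threshold at three. Consecutive vertices of $S$ along the separating curve lie on a common face, so by \autoref{lem:nearby-rigid} the map $\varphi$ preserves their pairwise distances; hence the (non-degenerate) triangle spanned by $S$ is mapped to a congruent triangle. Since $\varphi$ cannot stretch, every chord of that triangle must map to a congruent segment, which is impossible for a segment properly crossing a crease; so the triangle's interior contains no creases and therefore no vertex points, and the curve encloses nothing. To salvage your argument you would have to import exactly this distance-preservation step (or some other consequence of Kawasaki's alternating-sum condition beyond mere convexity of faces); the convexity/angle-count core alone is not, contrary to your closing claim, what makes $3$ the right threshold.
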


\begin{proof}
Suppose for a contradiction that $S$ is a set of at most three vertex points whose removal disconnects $G$.
Since $G$ is a plane graph, there must exist a simple closed curve $C$ in the plane that passes through $S$ 
and is otherwise disjoint from the vertices and edges of $G$, with at least one vertex inside the curve and at least one vertex outside the curve. (For folding patterns that include a ray of crease points, we count $\infty$ as being outside all such curves.) But as we show in the case analysis below, this is not possible:
\begin{itemize}
\item If $|S|=1$, any curve $C$ through the single vertex of $S$ that is otherwise disjoint from $G$ must remain within a single convex face of $G$, and cannot enclose anything.
\item If $S$ consists of two non-adjacent vertices, they can only have one face of $G$ in common.
Any curve $C$ through these two vertices that is otherwise disjoint from $G$ must remain within that face, and cannot enclose anything.
\item If $S$ consists of two adjacent vertices, then a curve $C$ through the two vertices $u$ and $v$ of $S$ that  is otherwise disjoint from $G$ can either stay within one of the two faces incident to edge $uv$ (not enclosing anything) or have one arc in one of these two faces and one arc in the other of the two faces, enclosing edge $uv$ but not enclosing any vertices.
\item If $S$ consists of three collinear vertex points, then curve $C$ must visit each of these three points in turn.
But the outermost of these two vertex points cannot belong to any convex face of the folding pattern (because this face would also contain the middle point), and cannot be connected by an arc of $C$.
\item If $S$ consists of three non-collinear vertex points $u$, $v$, and $w$, then $C$ can only enclose any vertex points that might lie interior to triangle $uvw$. However, triangle $uvw$ is mapped by the local flat folding map $\varphi$ to a congruent triangle, by \autoref{lem:nearby-rigid} and by the fact that there is only one Euclidean triangle (up to congruence) for any triple of distances between its vertices. In order to avoid stretching, every line segment formed by intersecting a line with triangle $uvw$ must be mapped by $\varphi$ to the corresponding line segment of the image triangle. In particular, there can be no creases within triangle $uvw$, because whenever a line segment properly crosses a crease of a local flat folding, it is not mapped to a congruent line segment. Therefore, every point inside triangle $uvw$ must be an unfolded point, and $C$ cannot contain a vertex point.
\end{itemize}
Because there is no way to construct curve $C$, the hypothesized set $S$ cannot exist.
\end{proof}

The assumption that $G$ is finite is used in the existence of $C$. If $G$ could be infinite, our tree realization construction could be used to construct a realization of an infinite tree in which $\infty$ is a degree-one leaf. This does not have the connectivity described by the lemma, but this is not a contradiction because it does not meet the assumptions of the lemma. 

\begin{theorem}
If $G$ is the finite graph of a local flat folding $\varphi$, then $G$ is 2-vertex-connected and 4-edge-connected.
\end{theorem}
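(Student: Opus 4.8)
My plan is to obtain both statements as quick corollaries of the three preceding lemmas together with Maekawa's theorem, with no new geometric input; all of the genuinely geometric reasoning (convexity of faces, behaviour at infinity) has already been packaged into Lemmas~\ref{lem:not-articulation}, \ref{lem:nearby-rigid}, and~\ref{lem:no-3-separation}.

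For 2-vertex-connectivity I would first observe that $G$ is connected, since \autoref{lem:no-3-separation} applied with $S=\emptyset$ already forbids $G$ from being disconnected. It then remains to rule out an articulation vertex, and I would split on which vertex is removed: if it is $\infty$, then \autoref{lem:not-articulation} is exactly the claim; if it is a vertex point $v$, then \autoref{lem:no-3-separation} with $S=\{v\}$ shows that $G-v$ is still connected. Hence $G$ has no articulation vertex and is 2-vertex-connected.

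For 4-edge-connectivity I would argue by contradiction. Suppose $F$ is an edge cut with $|F|\le 3$, and let $A\sqcup B=V(G)$ be the resulting partition into two nonempty sides, with $F$ the set of edges having one endpoint in each side. Relabel so that $\infty$, if it is present in $G$, lies in $B$; then every vertex of $A$ is a vertex point. Let $A_0\subseteq A$ be the set of $A$-endpoints of the edges of $F$, so that $|A_0|\le|F|\le 3$ and $A_0$ consists of vertex points. If $A_0\ne A$, then deleting $A_0$ from $G$ leaves $A\setminus A_0$ with no edge to $B$, so $G-A_0$ is disconnected (both $A\setminus A_0$ and $B$ being nonempty), contradicting \autoref{lem:no-3-separation}. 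If instead $A_0=A$, then $|A|\le 3$; every vertex of $A$, being a vertex point, is incident to at least three creases and hence, by Maekawa's theorem, to an even number at least $4$, while $G$ has at most one edge between two vertex points and no loop at a vertex point. Counting the edge-endpoints incident to $A$ then gives $|F|\ge 4|A|-|A|(|A|-1)=|A|(5-|A|)$, which equals $4$, $6$, $6$ for $|A|=1,2,3$ respectively, each larger than $3$ — a contradiction. So no edge cut of size at most $3$ exists.

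I expect the only real subtlety to be the bookkeeping in the last paragraph: one must place $\infty$ on the far side $B$ of the cut so that \autoref{lem:no-3-separation} — which concerns removal of vertex points, not of $\infty$ — can be applied to the near side $A$, and then separately eliminate the degenerate configuration in which $A$ is exhausted by endpoints of cut edges, which is precisely where the minimum-degree-$4$ consequence of Maekawa's theorem does the work. Everything else is immediate.
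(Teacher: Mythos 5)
Your proposal is correct and follows essentially the same route as the paper: 2-vertex-connectivity from \autoref{lem:not-articulation} and \autoref{lem:no-3-separation}, and 4-edge-connectivity by converting a hypothetical edge cut of size at most three into a vertex cut of at most three vertex points, with the degenerate case (a side consisting only of cut-edge endpoints) eliminated by the minimum-degree-four consequence of Maekawa's theorem. Your bookkeeping is in fact slightly more explicit than the paper's --- placing $\infty$ on the far side and taking all near-side endpoints, rather than one vertex-point endpoint per cut edge --- but the argument is the same in substance.
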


\begin{proof}
$G$ can have no articulation vertex, because neither $\infty$ nor any vertex point of $\varphi$ can be an articulation vertex (\autoref{lem:not-articulation} and \autoref{lem:no-3-separation} respectively).

Assume for a contradiction that $G$ could have three edges $e_1$, $e_2$, and $e_3$ whose removal disconnects $G$. Choose a vertex point $v_i$ as one of the two endpoints of each of these edges (as each edge in $G$ has at least one vertex point as its endpoint). The separation of $G$ caused by the removal of the edges $e_i$ cannot separate any subset of the three vertices $v_i$ from the rest of $G$, because $G$ has minimum degree four and, in a graph of this degree, any set of up to three vertices is connected to the rest of the graph by at least four incident edges.
Therefore, there must be at least one vertex of $G$ on each side of the separation that is not one of the three chosen vertices $v_i$. However, this implies that these three vertices also separate $G$, contradicting \autoref{lem:no-3-separation}. This contradiction implies that our assumption is false, and therefore that $G$ is 4-edge-connected.
\end{proof}

We remark that our realizations of 4-regular trees show that both 2-vertex-connectivity and 4-edge-connectivity are tight: some graphs that can be realized as global flat foldings are neither 3-vertex-connected nor 5-edge-connected.

\section{Conclusions}

We have shown that trees can be realized as the (truncated) graphs of flat folding patterns, and that despite this the (non-truncated) graphs of flat folding patterns must be highly connected.
However we have not succeeded in completely characterizing the graphs of flat folding patterns.
We leave the following questions as open for future research:
\begin{itemize}
\item Which plane graphs (with specified vertex $\infty$) are the graphs of global flat foldings?
\item What is the computational complexity of recognizing and realizing these graphs?
\item Is there any graph-theoretic difference between the graphs of global flat foldings and the graphs of local flat foldings? In particular does the folding-assignment version of Maekawa's theorem, that each vertex must have two more mountain folds than valley folds or vice versa, impose any nontrivial constraints on the graphs of flat foldings?
\item
\iffull
In Appendix~\ref{sec:orthotree}
\else
In the full version of this paper
\fi
we describe another class of graphs, the \emph{dual orthotrees}, that can always be realized as the graphs of local flat foldings. Can they always be realized as the graphs of global flat foldings?
\item What (if anything) changes when we consider folding patterns on a square sheet of paper (or other bounded shape) rather than on an infinite sheet?
\iffull
Appendix~\ref{sec:outer} begins
\else
In the full version we begin
\fi
a preliminary investigation of this case, in the special case where we restrict the vertex points to the boundary of the paper. On circular paper, all outerplanar graphs are possible, but on square paper, not even all trees can be folded; we find an exact characterization of the foldable trees, different from the characterization in \autoref{sec:trees}. However, similar questions without the restriction to boundary points remain open.
\item Previously we studied algorithms for realizing trees as convex subdivisions of the plane while optimizing the angular resolution of the resulting tree drawing~\cite{CarEpp-GD-06}. Can we use similar ideas to optimize the angular resolution of a folding pattern realization of a tree?
\end{itemize}

\iffull
\bibliographystyle{amsplainurl}
\else
\bibliographystyle{splncs}
\fi
\bibliography{folding}

\newpage
\appendix

\section{Dual orthotrees}
\label{sec:orthotree}

\begin{figure}[t]
\centering\includegraphics[width=0.5\textwidth]{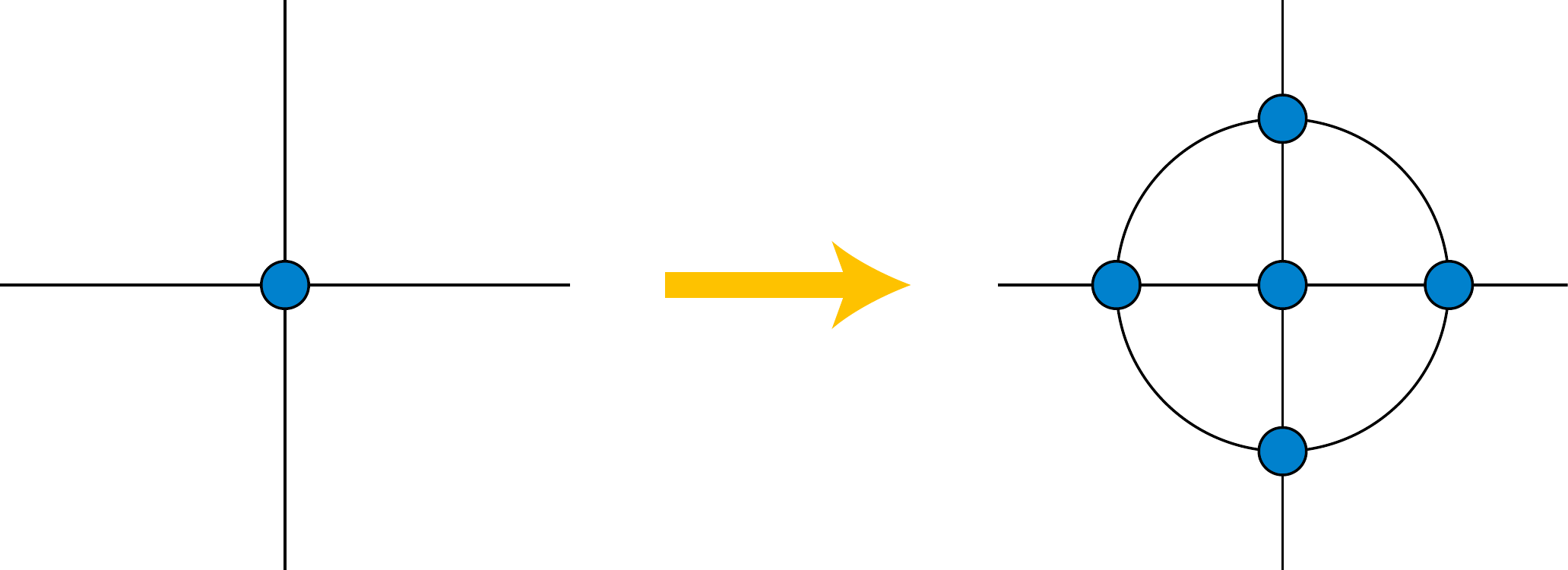}
\caption{The wheel replacement operation}
\label{fig:wheel-replacement}
\end{figure}

We define an operation on plane graphs (such as the graphs of flat foldings), that we call \emph{wheel replacement}. A \emph{wheel} is a planar graph consisting of a cycle and one additional vertex, adjacent to all the cycle vertices. In a wheel replacement operation, we replace one vertex $v$ of a given plane graph, of degree $d$, by a wheel whose cycle has $d$ vertices. We replace each edge of the given graph that is incident to $v$ by an edge incident to one of the cycle vertices of the wheel, in such a way that each cycle vertex has one neighbor outside the wheel and such that the cyclic ordering of these neighbors around the wheel is the same as the cyclic ordering of edges in the original graph incident to $v$. This operation is illustrated in \autoref{fig:wheel-replacement}.

\begin{lemma}
\label{lem:wheel-replacement}
If $G$ is the graph of a local flat folding $\varphi$, then the graph $G_v$ obtained by performing a wheel replacement on any vertex $v\ne\infty$ of $G$ is also the graph of a local flat folding.
\end{lemma}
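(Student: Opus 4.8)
The plan is to leave $\varphi$ untouched outside a tiny disk around the vertex point realizing $v$, and to install there a folded ``rosette'' whose crease graph is the wheel. Let $p$ be the vertex point of $\varphi$ that corresponds to $v$, with its $d$ creases running along rays in directions $\theta_0<\theta_1<\dots<\theta_{d-1}$ and wedge angles $\alpha_i=\theta_i-\theta_{i-1}$ (indices mod $d$, with $\theta_d=\theta_0+2\pi$). Maekawa's and Kawasaki's theorems applied at $p$ give that $d$ is even, that $\sum_i\alpha_i=2\pi$ and $\sum_i(-1)^{i+1}\alpha_i=0$, and that every $\alpha_i<\pi$. First I would fix a closed disk $D$ centered at $p$ small enough that $D$ meets no other vertex point of $\varphi$ and the only creases crossing $\partial D$ are the $d$ radial ones, at points $q_1,\dots,q_d$.

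Inside $D$ the layout is: keep the wheel's hub at $p$; place the $d$ rim vertices $r_1,\dots,r_d$ with $r_i$ on ray $i$; let the rim cycle be the polygon $r_1r_2\cdots r_d$, so each rim edge $\overline{r_ir_{i+1}}$ is a chord of the cone of wedge $i$; take the spokes to be the segments $\overline{p\,r_i}$; and truncate crease $i$ so that it now runs only from $q_i$ to $r_i$, whereupon $r_i$ inherits the adjacency to the outside vertex $u_i$ that used to be a neighbour of $v$. Then each new vertex has even degree ($d$ at the hub, $4$ at each $r_i$); at the hub the $d$ spokes meet with the old wedge angles $\alpha_i$, so Kawasaki still holds there; and at each $r_i$ the four incident creases are the collinear pair consisting of the truncated crease $i$ and the spoke (both on the line of ray $i$), together with the two rim edges, so Kawasaki at $r_i$ is equivalent to the line of ray $i$ bisecting the angle between the two rim edges at $r_i$.

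The crux is to shape the rim so that this bisection holds at all $r_i$ simultaneously. I would parametrize by the angle $f_i\in(0,\pi)$ that $\overrightarrow{r_ir_{i+1}}$ makes with ray $i$ at $r_i$. The triangle $p\,r_i\,r_{i+1}$ has angle $\alpha_i$ at $p$, hence angles $\pi-f_i$ at $r_i$ and $f_i-\alpha_i$ at $r_{i+1}$; feeding this together with the bisection conditions at all the $r_i$ turns the whole system into the single linear recurrence $f_{i+1}=\pi+\alpha_i-f_i$ (note $f_i-\alpha_i=\pi-f_{i+1}$). This recurrence closes up around the cycle precisely because $\sum_i(-1)^{i+1}\alpha_i=0$, i.e.\ exactly Kawasaki at $v$, leaving a one-parameter family of solutions $f_1=c$. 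Writing $S_m=\sum_{k\le m}(-1)^{k+1}\alpha_k$ one gets $f_{2j+1}=c-S_{2j}$ and $f_{2j}=\pi+S_{2j-1}-c$, so all $f_i$ fall in $(0,\pi)$ iff $c\in(\max_mS_m,\ \pi+\min_mS_m)$, an interval that is nonempty because $\max_mS_m-\min_mS_m<\pi$ — the other place Kawasaki enters, since the odd-indexed and the even-indexed angles each total $\pi$, so (together with positivity of all the $\alpha_i$) every peak of the partial alternating sums lies strictly within $\pi$ of every valley. Having fixed such a $c$ (which also makes all three triangle angles positive), the law of sines in the triangles $p\,r_i\,r_{i+1}$ gives $\rho_i\sin f_i=\rho_{i+1}\sin f_{i+1}$, so setting $\rho_i=C/\sin f_i$ for a single small constant $C$ assembles the $d$ triangles into a star-shaped simple polygon with $p$ in its interior (and lying well inside $D$ when $C$ is small), and the recurrence makes the triangle angle at each $r_i$ the same ($\pi-f_i$) on both sides — which is exactly the desired bisection.

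Finally I would build the new folding map $\varphi'$ by taking $\varphi'=\varphi$ outside $D$ and on the $d$ annular faces between $\partial D$ and the rim polygon, then propagating $\varphi'$ inward across the new creases (rim edges, then spokes) by reflections. As $D$ is simply connected, the only possible obstructions to a consistent propagation are the monodromies around the interior vertices $p,r_1,\dots,r_d$, and we have arranged Maekawa and Kawasaki at every one of these; so by the standard propagation argument $\varphi'$ is a well-defined continuous map that agrees with $\varphi$ near $\partial D$, hence globally, and is a local flat folding (each new point is a genuine vertex point: its degree is at least three, and its incident isometries cannot collapse to fewer than three distinct ones because all wedge angles are positive and less than $\pi$). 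Its graph is $G$ with $v$ blown up into a wheel, the former edges at $v$ attaching to the rim cycle in the same cyclic order in which the creases met $p$ — that is, $G_v$. The main obstacle is the third paragraph, producing a globally consistent shape for the rim, and the reason it goes through is that the closing-up of the rim polygon, the simultaneous bisection conditions, and Kawasaki at the new hub all reduce, via the same alternating-sum identity, to Kawasaki at the original vertex $v$.
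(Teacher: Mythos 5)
Your proof is correct, but it takes a much more explicit, computational route than the paper's. The paper's entire argument is two sentences: the folded image $\varphi(N)$ of a small neighborhood $N$ of $v$ lies inside a wedge of opening angle less than $\pi$ with apex $\varphi(v)$; choose a line transversal to that wedge and reflect across it the part of the image containing $\varphi(v)$. The new creases are then the $\varphi$-preimage of that line, which is automatically a closed polygon encircling $v$ and crossing each old crease once --- the wheel. Your construction builds exactly this object by hand: your bisection conditions at the rim vertices are the statement that consecutive rim edges have collinear images, your relation $\rho_i\sin f_i=C$ says all rim edges map into a single line at distance $C$ from $\varphi(v)$, and your nonemptiness condition $\max_m S_m-\min_m S_m<\pi$ for the parameter $c$ is precisely the paper's (unproved, asserted) claim that the folded image fits in a wedge of angle less than $\pi$. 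So the two proofs produce the same family of foldings, parametrized by the choice of line $(c,C)$. What the paper's viewpoint buys is brevity and the elimination of your final monodromy/developing-map step: once one sees the rim as the preimage of a line $\ell$, the new folding is simply $R_\ell\circ\varphi$ inside the rim polygon and $\varphi$ outside, and continuity is immediate because $\varphi$ maps the rim into the fixed line of $R_\ell$. What your version buys is an actual proof of the wedge-angle bound via the alternating partial sums (a detail the paper glosses over), an explicit description of the degrees of freedom, and a verification of Maekawa/Kawasaki at every new vertex that would transfer to settings where the ``reflect the image'' trick is unavailable. If you keep your write-up, I would replace the propagation-by-monodromy paragraph with the one-line definition $\varphi'=R_\ell\circ\varphi$ on the interior of the rim polygon, since you have already shown the rim maps into a single line.
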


\begin{proof}
In the local flat folding, the image of a sufficiently small neighborhood of $v$ lies within a wedge of opening angle less than $\pi$. Choose a line that crosses this wedge near $v$, and reflect across this line the points of the neighborhood of $v$ that lie on the same side of this line as $v$. The result of this reflection is another flat folding in which the creases caused by the new reflection form a cycle around $v$, realizing the wheel replacement operation.
\end{proof}

The same realization of a wheel replacement operation can also be visualized as folding over the corner in the sheet of paper formed at~$v$. However, this folding operation cannot always be performed in three-dimensional global flat foldings. The reason is that there might be a crease, disjoint from $v$ in the folding pattern but passing through $v$ in the folded state, that blocks $v$ from being folded over. An alternative 3d realization of the same folding pattern is the \emph{sink folding}, in which the corner is dented inwards (see \cite[p.~33]{Lan-ODS-12}), but again, other nearby parts of the paper may block this fold from being realized.

The graphs that can be constructed from repeated wheel replacement, starting from a multigraph with two vertices and four non-loop edges, include the dual graphs of the surface quadrangulations of \emph{orthotrees}~\cite{DamFlaMei-FWCG-05}, polycubes formed by gluing cubes together in $\R^3$ so that the gluing pattern of the cubes forms a tree. For instance, the \emph{Dal\'\i{} cross}, an unfolded net of a four-dimensional hypercube made famous by Salvador Dal\'\i's painting \emph{Crucifixion (Corpus Hypercubus)}, is an orthotree, and the dual graph of its surface quadrangulation is shown in \autoref{fig:dali-cross}. For this reason we call the graphs formed by repeated wheel replacement starting from the two-vertex four-edge multigraph the \emph{dual orthotrees}.

\begin{figure}[t]
\centering\includegraphics[width=0.8\textwidth]{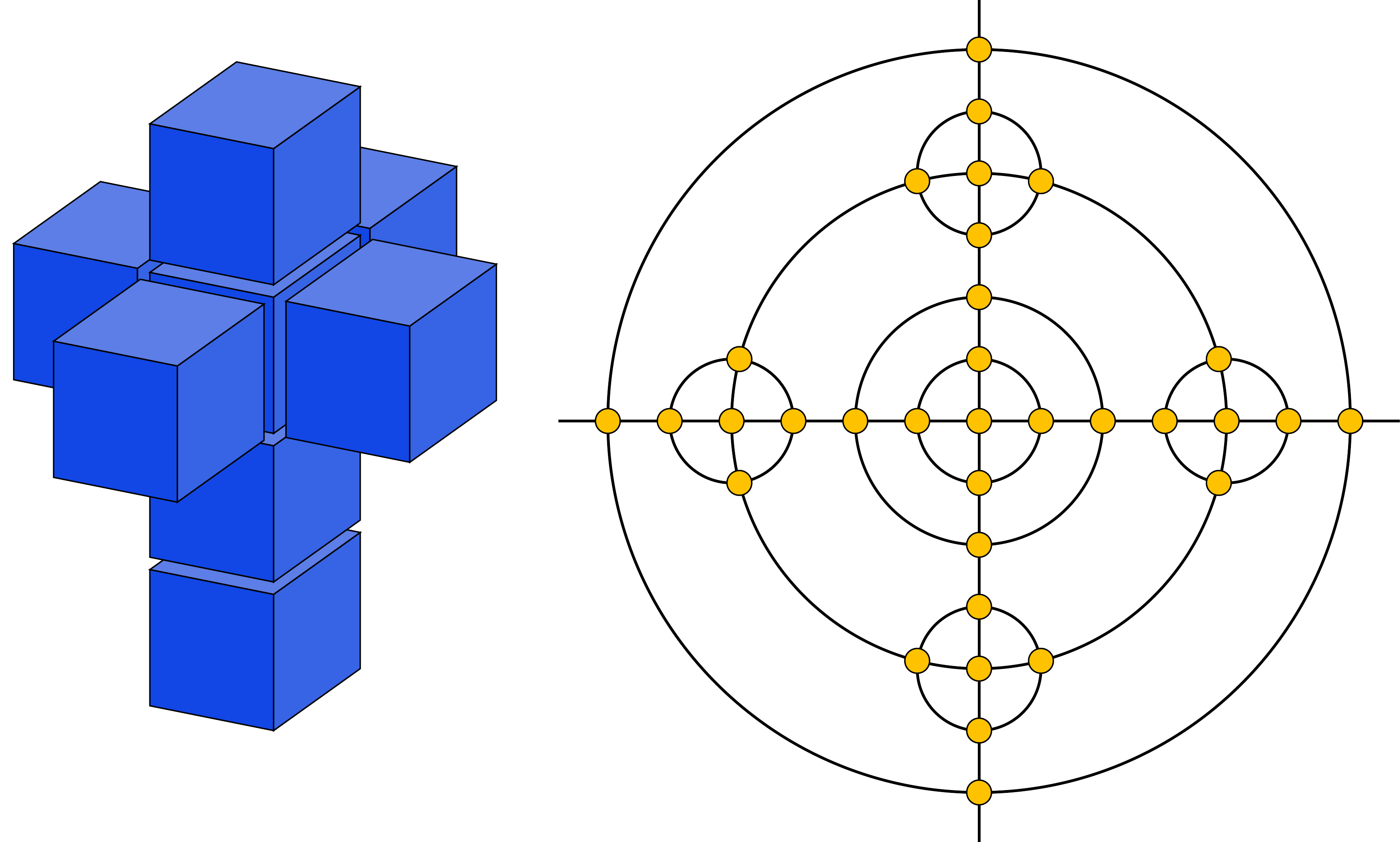}
\caption{The Dal\'\i{} cross (exploded view) and the dual graph of its surface quadrangulation (with one vertex $\infty$ of the dual graph, representing the uppermost square face of the cross, not shown).}
\label{fig:dali-cross}
\end{figure}.

\begin{theorem}
Every dual orthotree is realizable as the graph of a local flat folding.
\end{theorem}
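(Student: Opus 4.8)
The plan is to proceed by induction on the number of wheel replacement operations used to build the dual orthotree, using \autoref{lem:wheel-replacement} to carry out the induction step. The only real content, then, is the base case: I need to exhibit a local flat folding whose graph is the two-vertex multigraph with four non-loop edges between a finite vertex $v$ and the vertex $\infty$. This is just a single-vertex flat folding with four creased rays emanating from $v$, all running off to infinity. The simplest choice is to place four rays at angles $0, \pi/2, \pi, 3\pi/2$ from $v$ — the coordinate axes — and fold the plane onto a quadrant by the map $\varphi:(x,y)\mapsto(|x|,|y|)$. This is a local flat folding (indeed a global one): the four wedges all have opening angle $\pi/2<\pi$, Kawasaki's alternating sum is $\pi/2-\pi/2+\pi/2-\pi/2=0$, and Maekawa's parity condition holds since there are four creases. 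Its graph has the single vertex point $v$, the special vertex $\infty$, and exactly four edges joining them (one per ray), which is precisely the two-vertex four-edge multigraph.

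From there the induction is immediate. Suppose $H$ is a dual orthotree obtained from the two-vertex four-edge multigraph by a sequence of $k$ wheel replacements, each at a vertex other than $\infty$ (this is guaranteed by the definition of dual orthotrees, since the starting multigraph's $\infty$-vertex participates only as an endpoint and wheel replacements are applied to finite vertices). By induction the dual orthotree $H'$ obtained after the first $k-1$ of these operations is the graph of a local flat folding $\varphi'$, and then \autoref{lem:wheel-replacement} — applied at the finite vertex $v\ne\infty$ on which the final wheel replacement acts — produces a local flat folding whose graph is exactly $H$. This completes the induction and the proof.

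The step I expect to require the most care is making sure the base case's graph is literally the two-vertex four-edge multigraph as specified in the definition of dual orthotrees, and that the wheel replacements in any construction sequence of a dual orthotree never touch $\infty$ — so that \autoref{lem:wheel-replacement}, which is stated only for vertices $v\ne\infty$, actually applies at every step. Both points follow directly from the definitions given earlier (the vertex $\infty$ of the starting multigraph represents an uncreased region and is never the center of a wheel replacement; the folding $(x,y)\mapsto(|x|,|y|)$ has exactly one vertex point and four rays), so no genuine obstacle arises — the theorem is essentially a corollary of \autoref{lem:wheel-replacement} together with the exhibited base folding. I would also note in passing, as the surrounding text already hints, that this argument produces only a \emph{local} flat folding: the reflection used in \autoref{lem:wheel-replacement} need not be realizable in three dimensions, which is exactly why the companion question for \emph{global} flat foldings is left open.
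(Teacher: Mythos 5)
Your base case is fine, and the induction step is fine as far as it goes, but the claim on which the whole argument rests --- that ``the wheel replacements in any construction sequence of a dual orthotree never touch $\infty$'' and that this ``follows directly from the definitions'' --- is not correct. The definition of a dual orthotree is purely combinatorial: it is any graph obtained from the two-vertex, four-edge multigraph by repeated wheel replacement at \emph{any} vertex. Nothing in that definition designates one of the two starting vertices as $\infty$ or exempts it from replacement. Concretely, take the sequence ``replace $a$, then replace $b$'': starting from the multigraph on $\{a,b\}$ this yields the dual of the surface quadrangulation of a $1\times1\times2$ box, and both original vertices have been replaced. In any realization of the starting multigraph as a local flat folding, one of $a,b$ must be the special vertex $\infty$ (a folding cannot carry four parallel edges between two finite vertex points), so this construction sequence forces a wheel replacement at the folding's $\infty$, which \autoref{lem:wheel-replacement} does not cover. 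Your induction therefore only realizes the subclass of dual orthotrees built by sequences that happen to spare one of the two original vertices.

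There are two ways to close the gap. One is to prove a re-ordering statement: that every dual orthotree admits \emph{some} construction sequence in which a designated vertex (to be used as $\infty$) is never replaced --- essentially, re-rooting the underlying tree of cubes at a cube containing the face you want to keep. That is plausible but is a genuine lemma, not a consequence of the definitions, and you have not supplied it. The paper instead extends the folding construction to the vertex $\infty$ itself: the reflection trick of \autoref{lem:wheel-replacement} can be carried out at infinity provided the entire image $\varphi(\R^2)$ lies in a wedge of opening angle less than $\pi$, and this invariant holds for the one-vertex base folding and is preserved by every wheel replacement at a finite vertex. (Your base folding $(x,y)\mapsto(|x|,|y|)$, whose image is a quarter-plane, already satisfies it.) With that additional observation the induction goes through for arbitrary replacement sequences; as written, your proof is missing exactly this step.
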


\begin{proof}
By \autoref{lem:wheel-replacement} we can realize every wheel replacement except possibly the ones at the vertex $\infty$. The same folding construction of the lemma can also be performed at $\infty$ only when the image of the flat folding mapping, $\varphi(\R^2)$, lies within a wedge of opening angle less than $\pi$ (as it does for some but not all local flat foldings), rather than covering the entire plane. This property of the flat folding, that the image of $\varphi$ lies within a wedge, is true for the initial fold with one vertex point, and it remains true after each wheel replacement operation at a finite vertex. Therefore, we can perform wheel replacements at all vertices including $\infty$, and by repeated wheel replacements construct any dual orthotree.
\end{proof}

It is possible to realize a wheel replacement operation in a folding pattern by more complicated folds that do not come from a single reflection across a line. There appear to be enough degrees of freedom in the possible realizations of a wheel replacement to allow the realization of any dual orthotree, using an inductive construction in which at each step we ensure that all vertex points
form corners that are unobstructed by other creases. However, we have not found a mathematical description of these realizations for which we can prove that the inductive steps of this construction are always possible. We leave the question of whether every dual orthotree is the graph of a global flat folding as open for future research.

\section{Bounded shapes with boundary vertex points}
\label{sec:outer}

In the earlier parts of this paper we have made the simplifying assumption that the sheet of paper we are folding covers the entire plane. Here, we remove that assumption, and instead study what happens when we use bounded sheets of paper, such as the squares traditionally used for origami. However, we make a different simplifying assumption: that the vertex points of the folding lie on the boundary of the paper.

\subsection{Additional definitions}
We define an \emph{outer local flat folding}, for a given convex region $K$ of the plane,
to be a mapping $\varphi:K\mapsto\R^2$ with the same properties as a local flat folding of the infinite plane: every point of $K$ must be an unfolded point, crease point, or vertex point. However, we additionally require that every vertex point be on the boundary of $K$. Thus, the creases of the folding are \emph{chords} or $R$: line segments that connect two boundary points of $K$, and otherwise pass through the interior of $K$. We define an \emph{outer global flat folding}, as in the case of unbounded sheets of paper, as a local flat folding that can be $\epsilon$-approximated by the vertical projections of three-dimensional topological embeddings of $K$.
We define the \emph{graph} of an outer local or global flat folding to have as its vertices the folded points on the boundary of $K$ (regardless of whether these points are vertex points or crease points) and to have as its edges the pairs of these points that are connected by creases of the folding.

Any non-crossing pattern of finitely many creases on $K$ will describe a valid outer local flat folding. This folding can be constructed by adding one crease at a time, for each new crease composing the mapping function $\varphi$ with the mapping that reflects the plane across the new crease.
We will prove that, when $K$ is a disk or a square, every outer local flat folding is also an outer global flat folding. However, the example in \autoref{fig:unfoldable-triangle} shows that this result does not generalize to other convex shapes such as an equilateral triangle. The folding pattern in the diagram represents a local flat folding that cannot be realized as a global flat folding. The figure has three-way rotational symmetry, with three big triangular flaps surrounding a central equilateral triangle, which is slightly twisted from the outer triangle. Each of the three flaps has a crease separating its sharp corner from the central triangle, with a ``shoulder'' where the crease meets the side of the triangle near a vertex of the central triangle.
If the figure could fold flat globally, two of the three flaps would be on the same side of the central  triangle. But when this happens, the two flaps get in each other's way, so that they can't both be folded flat. If the more clockwise of the two flaps were folded closer to the central triangle, its  shoulder would lie across the crease of the other flap, blocking it from folding. And if the counterclockwise flap were folded closer to the central triangle, its sharp tip would (after folding the crease separating the tip from the central triangle) again lie across the crease of the other flap, blocking it from folding. So no global flat folding is possible.

\begin{figure}[t]
\centering\includegraphics[width=0.4\textwidth]{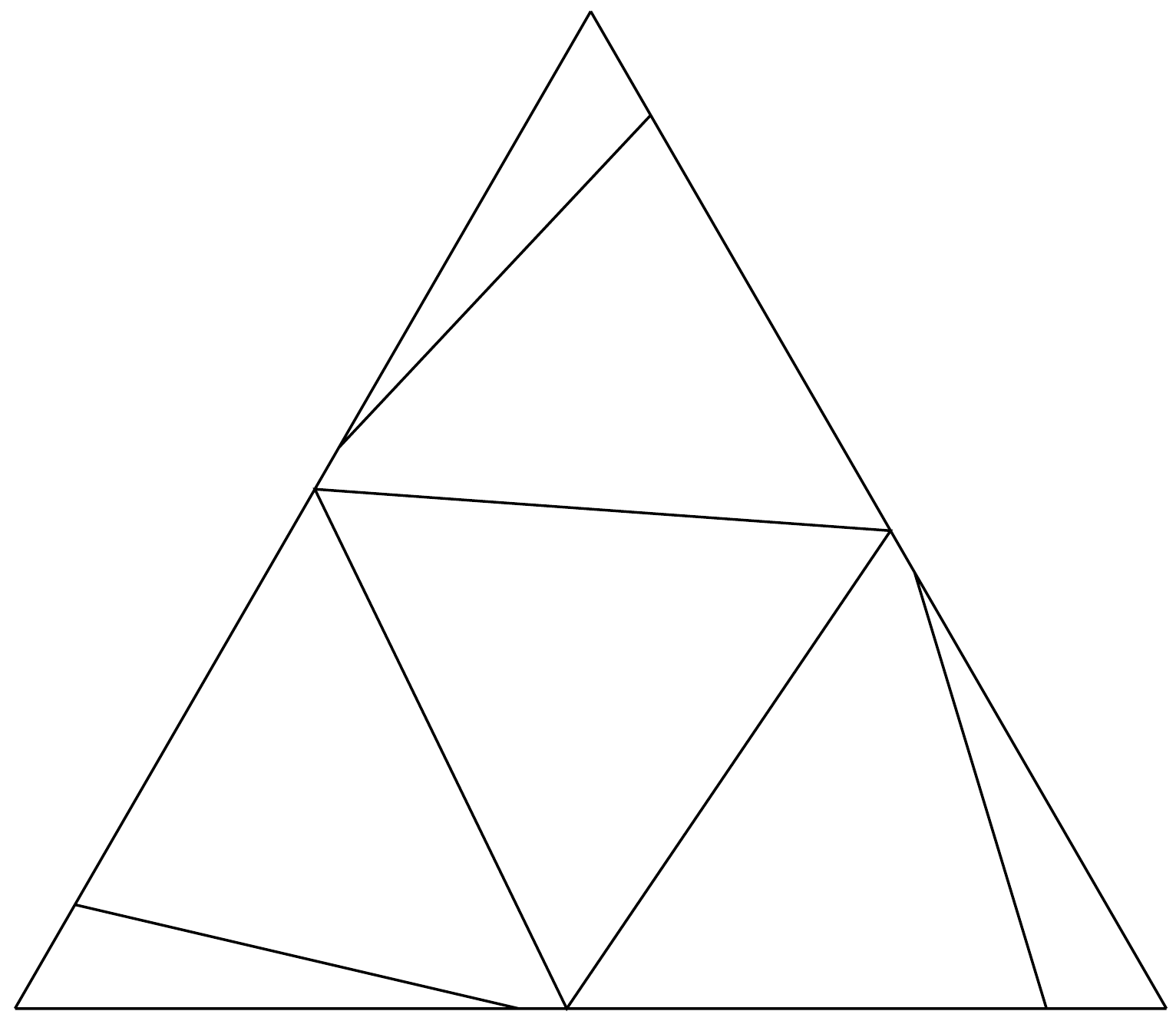}
\caption{A folding pattern for an outer local flat folding of an equilateral triangle that cannot be realized as a global flat folding.}
\label{fig:unfoldable-triangle}
\end{figure}

\subsection{Safe creases}
The following two  lemmas will be very helpful for us in proving that certain outer local flat foldings on certain shapes can also be realized as outer global flat foldings.

\begin{lemma}
\label{lem:inescapable}
Let $u$ and $v$ be boundary folding points of an outer local flat folding $\varphi$ on a given region $K$, such that $\varphi$ includes a crease on line segment $uv$. Suppose also that one of the two regions into which $uv$ partitions $K$, region $C$, has the property that along the boundary curve of $C$ from $u$ to $v$, the distances from $u$ are monotonically increasing and the distances from $v$ are monotonically decreasing. Transform $\varphi$ by a congruence of the plane (if necessary) so that it is the identity mapping on segment $uv$. Then $\varphi(C)$ lies within the union of $C$ and its reflection across $uv$.
\end{lemma}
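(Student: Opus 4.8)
The plan is to combine two ingredients: that a local flat folding is a non-expansive map, together with an elementary (but somewhat delicate) piece of planar geometry about lenses.

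\textbf{From folding to a lens.} After the stated normalization, $\varphi$ restricts to the identity on segment $uv$; in particular $\varphi(u)=u$ and $\varphi(v)=v$. Moreover $\varphi$ is non-expansive: the $\varphi$-image of any rectifiable curve has the same length as the curve ($\varphi$ is an isometry away from the creases and is continuous across them), so the image of the straight segment from $x$ to $y$ is a curve of length $d(x,y)$ joining $\varphi(x)$ and $\varphi(y)$, whence $d(\varphi(x),\varphi(y))\le d(x,y)$. Applying this with $x=p\in C$ and $y\in\{u,v\}$ shows that $\varphi(p)$ lies in the lens
\[
L_p:=\overline D(u,\,d(u,p))\ \cap\ \overline D(v,\,d(v,p)).
\]
So it suffices to prove the purely geometric statement: for every $p\in C$, $L_p\subseteq C\cup\sigma(C)$, where $\sigma$ denotes reflection across line $uv$.

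\textbf{Reduction and the vesica.} Let $\overline{H^+}$ be the closed half-plane bounded by line $uv$ that contains $C$; since $uv$ is a chord of the convex region $K$, we have $C=K\cap\overline{H^+}$, which is convex. As $L_p$ and $\sigma$ are symmetric with respect to line $uv$, it is enough to show $L_p\cap\overline{H^+}\subseteq C$. Let $\gamma$ be the boundary curve of $C$ from $u$ to $v$, so $\partial C=\gamma\cup uv$. The hypothesis says $t\mapsto d(u,\gamma(t))$ increases and $t\mapsto d(v,\gamma(t))$ decreases; since these functions equal $0$ and $d(u,v)$ at the endpoints, they stay in $[0,d(u,v)]$. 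Hence $\gamma$, and therefore the convex region $C$ bounded by $\gamma$ and $uv$, lies inside the ``vesica'' $\overline D(u,d(u,v))\cap\overline D(v,d(u,v))$. In particular, writing $a:=d(u,p)$ and $b:=d(v,p)$, we have $a,b\le d(u,v)$ for every $p\in C$ (and $a+b\ge d(u,v)$ by the triangle inequality).

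\textbf{The upper lens lies in $C$.} The convex set $L_p\cap\overline{H^+}$ has boundary $\alpha_u\cup\alpha_v\cup\tau$, where $\alpha_u$ is an arc of $\partial D(u,a)$ running from the point $q_1$ of segment $uv$ at distance $a$ from $u$ to the apex $p$ of the lens, $\alpha_v$ is the analogous arc of $\partial D(v,b)$ ending at a point $q_2$, and $\tau$ is a subsegment of line $uv$; the inequalities $a,b\le d(u,v)\le a+b$ place $q_1$, $q_2$ and all of $\tau$ on segment $uv\subseteq C$. It remains to show $\alpha_u\subseteq C$ (then $\alpha_v\subseteq C$ by symmetry, so $\partial(L_p\cap\overline{H^+})\subseteq C$, and since $C$ is convex and a compact convex set is the convex hull of its boundary, $L_p\cap\overline{H^+}\subseteq C$, finishing the proof). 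Both endpoints $q_1,p$ of $\alpha_u$ lie in $C$, and the relative interior of $\alpha_u$ lies in the open half-plane, so $\alpha_u$ can meet $\partial C=\gamma\cup uv$ only along $\gamma$ and only at points of $\gamma$ at distance exactly $a$ from $u$. By monotonicity of $t\mapsto d(u,\gamma(t))$, such points form a single connected sub-arc of $\gamma$ --- equivalently, $\gamma$ ``crosses the circle $\partial D(u,a)$ only once'' --- and that sub-arc is contained in $\gamma\subseteq C$. If $\alpha_u$ escaped $C$ it would have to leave and re-enter $C$, hence meet $\partial C$ at two points of that sub-arc; but that sub-arc and $\alpha_u$ are both arcs of the circle $\partial D(u,a)$ contained in the semicircle lying in $\overline{H^+}$, so the portion of $\alpha_u$ between those two points would coincide with part of a sub-arc of $\gamma\subseteq C$, contradicting that $\alpha_u$ was outside $C$ in between. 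Hence $\alpha_u\subseteq C$.

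\textbf{Where the difficulty lies.} Everything up to ``it suffices to prove $L_p\subseteq C\cup\sigma(C)$'' is routine. The crux is this geometric containment, and inside it the arc-inclusion $\alpha_u\subseteq C$: this is exactly the place where the monotone-distance hypothesis is used (both through the vesica containment and through the single-crossing property of $\gamma$ with circles centered at $u$ or $v$), and one must handle some degenerate cases with care --- $p$ on segment $uv$ (where the lens degenerates to the point $p$), tangent circles, and stretches where $d(u,\cdot)$ is only weakly monotone along $\gamma$ (where $\gamma$ itself follows a circular arc, which is nonetheless still contained in $C$).
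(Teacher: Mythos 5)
Your proof is correct and follows essentially the same route as the paper's: use the fact that $\varphi$ fixes $u$ and $v$ and cannot increase distances to trap $\varphi(p)$ in a lens of two disks centered at $u$ and $v$, then invoke the monotone-distance hypothesis to show that this lens lies in $C\cup\sigma(C)$. The only differences are cosmetic --- the paper takes the lune through the boundary point $q$ of $C$ lying perpendicularly above $p$ rather than the lens through $p$ itself, and it merely asserts the containment of that lune in $C\cup\sigma(C)$, whereas you verify the corresponding containment in detail.
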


\begin{proof}
Let $p$ be any point in $C$, and let $q$ be the point on the boundary of $C$ such that $pp'$ is perpendicular to $uv$. Let $L$ be the lune formed by intersecting two disks, centered at $u$ and $v$, with $q$ on their boundary (\autoref{fig:inescapable}). Then $\varphi$ maps $u$ and $v$ to themselves, and cannot increase the distance of any other point from $u$ or from $v$.  Therefore it must map each of the two disks defining $L$ into itself, and (because $L$ is the set of points in both disks) must map $L$ to itself. But $L$ contains $p$ and is entirely contained in the union of $C$ and its reflection, so the image of $p$ must lie within this union.
\end{proof}

\begin{figure}[t]
\centering\includegraphics[width=0.4\textwidth]{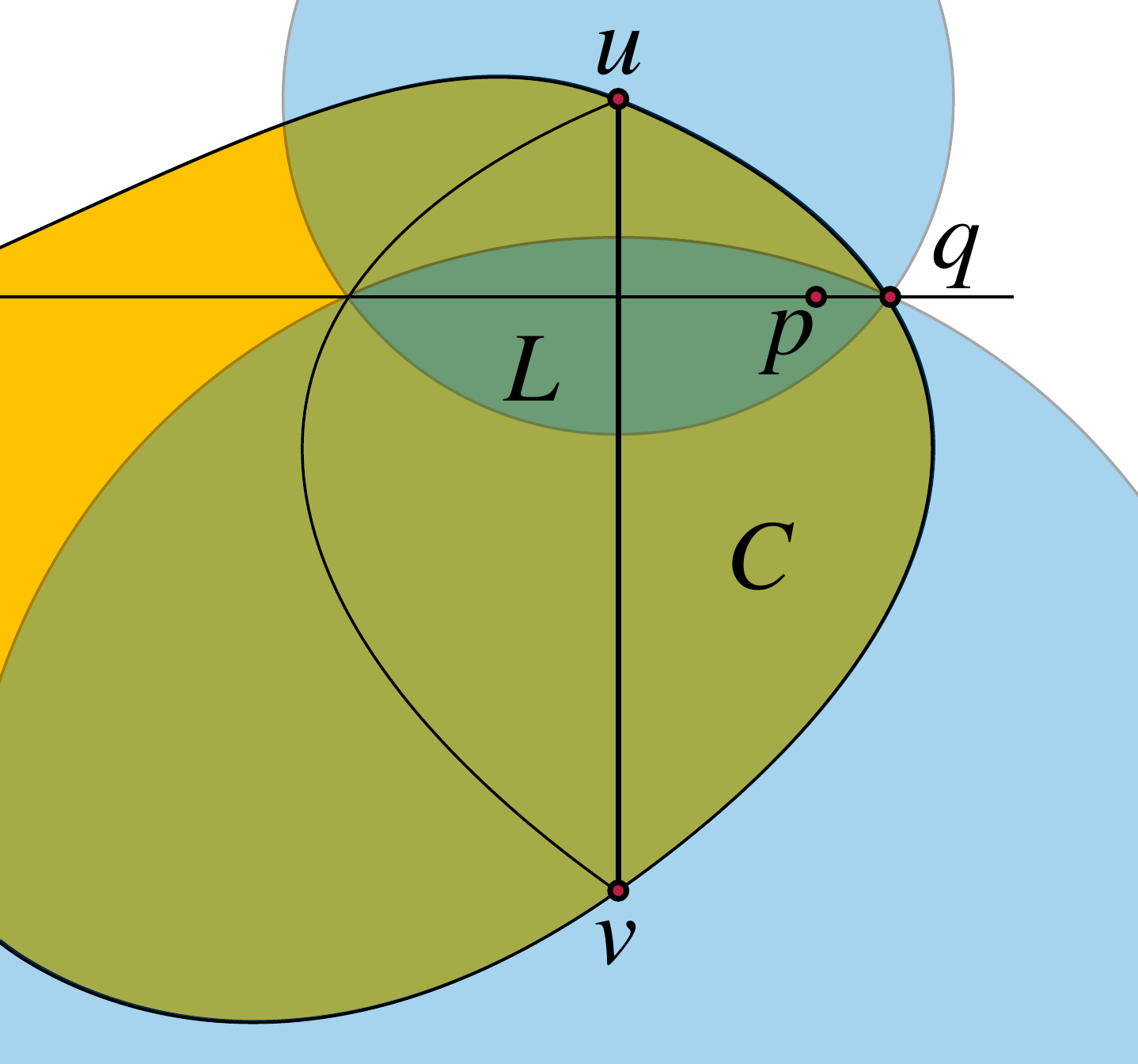}
\caption{Illustration for \autoref{lem:inescapable}}
\label{fig:inescapable}
\end{figure}

\begin{lemma}
\label{lem:safe-crease}
Let $u$ and $v$ be boundary folding points of an outer local flat folding $\varphi$ on a given region $K$, such that $\varphi$ includes a crease on line segment $uv$, splitting off a subregion $C$ of $K$ that meets the conditions of \autoref{lem:inescapable}. Suppose in addition that the reflection across $uv$ of the boundary of $C$ does not cross any crease of $\varphi$.
Let $\varphi_1$ be the outer local flat folding defined by crease $uv$ and all of the creases within the region $C$, and let $\varphi_2$ be the outer local flat folding defined by the remaining creases. Then if both $\varphi_1$ and $\varphi_2$ are outer global flat foldings, then so is $\varphi$.
\end{lemma}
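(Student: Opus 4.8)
The plan is to combine the two global flat foldings $\varphi_1$ (the fold along $uv$ together with everything inside $C$) and $\varphi_2$ (everything outside $C$, with $uv$ treated as an uncreased segment in the complementary region) into a single three‑dimensional embedding that witnesses $\varphi$ as an outer global flat folding. The key geometric fact supplied by the hypotheses is that, after normalizing $\varphi$ to be the identity on $uv$, Lemma~\ref{lem:inescapable} confines $\varphi(C)$ to the union $U$ of $C$ with its mirror image $C'$ across the line $uv$; moreover, by the extra hypothesis, the boundary curve of $C'$ is not crossed by any crease of $\varphi$, hence not by any crease of $\varphi_2$. So the image of the ``flap'' $C$ under folding lies in a region that the rest of the paper (governed by $\varphi_2$) does not cut across. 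This is exactly the ``protected region'' we need in order to glue the two embeddings.

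First I would fix $\epsilon>0$ and take the $\epsilon/3$‑approximating embeddings $\varphi_{1,\epsilon/3}:K\to\R^3$ and $\varphi_{2,\epsilon/3}:K\to\R^3$ guaranteed by the global flat foldings $\varphi_1$ and $\varphi_2$. Second, I would rescale the vertical coordinate of $\varphi_{1,\epsilon/3}$ (restricted to $C$) so that the entire three‑dimensional image of $C$ sits inside an arbitrarily thin horizontal slab of some height $h$; since $\varphi_{2,\epsilon/3}$ is an embedding of a disk, near the plane segment $uv$ its image is a thin sheet that I can perturb (within $\epsilon/3$) to be flat and horizontal in a neighborhood of $uv$, and I can choose the attaching height so that the thin slab carrying $\varphi_{1,\epsilon/3}(C)$ is inserted directly ``above'' (or below) the local sheet of $\varphi_{2,\epsilon/3}$ along $uv$, matching up continuously across the shared boundary segment $uv$. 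Third — and this is the crucial point — because the planar projection of $\varphi_{1,\epsilon/3}(C)$ lies within the $\epsilon$‑neighborhood of $U$, and no crease of $\varphi_2$ crosses $\partial C'$, the region of the plane lying under the inserted flap is covered by $\varphi_{2,\epsilon/3}$ by a \emph{single} almost‑flat sheet of paper (the faces of $\varphi_2$ meeting that region are uncreased there); so I can locally push that one sheet of $\varphi_{2,\epsilon/3}$ slightly out of the slab used by the flap, removing the only possible self‑intersections, while changing the projection by at most $\epsilon/3$. The resulting map is a topological embedding $K\to\R^3$ whose vertical projection is within $\epsilon$ of $\varphi$; letting $\epsilon\to0$ gives the claim.

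The steps in order: (i) normalize $\varphi$ on $uv$ and invoke Lemma~\ref{lem:inescapable} to locate $\varphi(C)$ inside $U=C\cup C'$; (ii) observe that the hypothesis ``$\partial C'$ crosses no crease of $\varphi$'' implies the part of the paper governed by $\varphi_2$ lying over $U\setminus C$ is uncreased, hence a single sheet; (iii) take $\epsilon/3$‑embeddings for $\varphi_1$ and $\varphi_2$; (iv) compress the flap $\varphi_{1,\epsilon/3}(C)$ into a thin horizontal slab attached continuously to $\varphi_{2,\epsilon/3}$ along $uv$; (v) push the single offending sheet of $\varphi_{2,\epsilon/3}$ out of the slab to kill self‑intersections; (vi) check the composite is an embedding and projects $\epsilon$‑close to $\varphi$, then take $\epsilon\to0$.

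The main obstacle I anticipate is step (iv)–(v): making the gluing along $uv$ genuinely continuous (indeed a topological embedding of the disk $K$) rather than merely ``two embeddings sharing a boundary segment,'' and verifying that the only self‑intersections introduced are between the inserted flap and that one identifiable sheet of $\varphi_2$ — so that a single small vertical push‑off suffices. This requires being a little careful that the flap $C$, in the folding $\varphi$, really is attached to the rest of the paper only along the crease $uv$ (which is why $C$ is a region split off by that crease), and that the $\epsilon$‑slop does not allow the flap's image to escape the neighborhood of $U$ where $\varphi_2$ is well‑behaved; the monotonicity condition of Lemma~\ref{lem:inescapable} and the no‑crossing hypothesis on $\partial C'$ are exactly what rule this out, so the argument is essentially a careful topological bookkeeping rather than a new idea.
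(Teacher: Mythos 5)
Your overall strategy is the same as the paper's: fold the flap $C$ first into an arbitrarily thin layer, use \autoref{lem:inescapable} to confine its planar shadow to $U=C\cup C'$, use the no-crossing hypothesis to guarantee that $\varphi_2$ does not cut across $U$, and then carry out $\varphi_2$ with the thin folded flap riding along as if it were part of the sheet. The paper implements this by composition rather than by gluing-and-surgery: it flattens the $\varphi_1$-embedding to an arbitrarily small vertical extent, applies the $\varphi_2$-embedding to the $(x,y)$-coordinates of the result (legitimate because $U$ lies inside a single face of $\varphi_2$, so this is a near-rigid motion of the whole flap), and adds back the residual $z$-coordinate. That formulation makes continuity along $uv$ automatic and needs no local repair step.

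The genuine gap in your write-up is in steps (ii) and (v). You assert that ``the region of the plane lying under the inserted flap is covered by $\varphi_{2,\epsilon/3}$ by a single almost-flat sheet,'' and you plan to push that one sheet out of the way. This conflates the crease pattern (the domain) with the folded image. The hypothesis that no crease of $\varphi$ crosses the reflection of $\partial C$ tells you only that $C'$ lies inside a single face of $\varphi_2$ \emph{in the domain}; it says nothing about how many layers of $\varphi_2$'s folded paper stack up over the planar region $\varphi_2(C')$ in the folded state. Faces of $\varphi_2$ from entirely different parts of $K$ can be folded over so that their images cover that region, so there is in general no ``single offending sheet,'' and a sheet trapped between other layers or pinned by nearby creases cannot simply be pushed out of the slab. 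Fortunately your construction does not need the push-off at all: once the flap is compressed to vertical thickness $h$ and attached directly adjacent to the image of the face of $\varphi_2$ containing $uv$, compactness and the disjointness of the finitely many sheets of the embedding $\varphi_{2,\epsilon/3}$ give a positive lower bound on the vertical gap between that face's image and every other sheet over the relevant region; taking $h$ below that bound already avoids all self-intersections. Replacing (v) by this thin-slab argument (or by the paper's composition formula) closes the gap.
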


\begin{proof}
Intuitively, we fold $\varphi_1$ first, and then treat its folded state as part of the flat sheet of paper while folding $\varphi_2$. In terms of the $\epsilon$-approximate embeddings that we use to define global flat foldings, what this means is that we construct an embedding of $K$ into $\R^3$ that represents $\varphi_1$, compose it with a transformation that flattens the vertical ($z$) dimension to an arbitrarily small value (so that the image of the embedding is arbitrarily close to the plane again), and then compose this flattened embedding for $\varphi_1$ with the embedding for $\varphi_2$. That is, we apply the mapping of $\varphi(2)$ to the $x$ and $y$ coordinates of the flattened image under $\varphi_1$, and then add the $z$-coordinate of the image under $\varphi_1$ to the result.
\end{proof}

We call a crease $uv$ that meets the conditions of \autoref{lem:safe-crease} a \emph{safe crease}.

\subsection{Disks}

In this subsection we consider the case that the sheet of paper to be folded has the shape of a circular disk. In this case, we can always use safe creases to find global foldings of outer local flat foldings.

\begin{lemma}
\label{lem:safe-chord-exists}
Let $K$ be a disk, let $\varphi$ be a local outer flat folding of $K$, and let $R$ be a region of $K$ bounded by three or more creases of $\varphi$. Then at least one of the bounding creases of $R$ is a safe crease.
\end{lemma}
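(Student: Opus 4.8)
The plan is to reduce ``safe crease'' to a statement about a single circular arc, and then to pick the bounding crease of $R$ whose arc is the most constrained.

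I would first record two elementary facts about an outer local flat folding $\varphi$ of a disk $K$ of radius $\rho$ and centre $O$. Because every vertex point of $\varphi$ lies on $\partial K$, two creases cannot cross in the interior of $K$ (an interior crossing would be a vertex point incident to four creases), so the creases are pairwise non-crossing chords, meeting only at common endpoints on $\partial K$. And for any chord $uv$, the circular segment it cuts off whose arc subtends a central angle $\beta\le\pi$ --- its \emph{minor segment} $C$ --- automatically satisfies the hypothesis of \autoref{lem:inescapable}: parametrising the arc of $C$ by the central angle $\theta\in[0,\beta]$ measured from $u$, the distance to $u$ is $2\rho\sin(\theta/2)$, which is increasing on $[0,\beta]\subseteq[0,\pi]$, and by symmetry the distance to $v$ is decreasing. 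Moreover the reflection of $C$ across the line $uv$ stays inside $K$: the reflected boundary circle has radius $\rho$ and centre $O'$ with $uv$ the perpendicular bisector of $OO'$, so every point on the $O$-side of $uv$ is closer to $O$ than to $O'$ and hence lies in $K$. Thus the reflection of $\partial C$ is the chord $uv$ together with a circular arc $A'$ lying (apart from $u,v$) in the interior of $K$, and since no crease can cross the chord $uv$, a bounding crease $uv$ of $R$, with its minor segment as $C$, is safe if and only if no crease of $\varphi$ crosses $A'$; in particular it is safe as soon as the whole reflected band (the region between $uv$ and $A'$) is contained in a single face $F$ of $\varphi$, necessarily the face on the far side of $uv$ from the minor segment.

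The choice of crease then has to make the reflected band small and well placed. When $O\in\overline R$, each of the (at least three) bounding creases of $R$ has $R$ on its closed major side, and the arcs cut off behind these creases are pairwise disjoint, so the smallest of them has central angle at most $2\pi/3$; I take $\ell=uv$ to be a bounding crease achieving this minimum. The tangent--chord angle of $\ell$ at each endpoint then equals half that smallest arc, and comparing it with the interior angle of $R$ at that endpoint --- an inequality that comes down exactly to the fact that the remaining $\ge 1$ arcs are each at least that large --- shows that $A'$ leaves each endpoint into the interior of $R$; and since the smallest arc has angle at most $2\pi/3$, the height of the reflected band is at most $\mathrm{dist}(O,\ell)$, which keeps $A'$ clear of the farther sides of $R$ as well. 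Hence the band lies in $F=R$ and $\ell$ is safe. When $O\notin\overline R$ I would instead take $\ell$ to be the bounding crease containing the point of $\overline R$ closest to $O$ (this point cannot lie on an arc side, else $\overline R$ would lie on $\partial K$), so that $\ell$ separates $O$ from $R$ and $R$ lies inside the minor segment of $\ell$; the reflected band is then on the $O$-side of $\ell$, and a parallel comparison --- using that $\ell$ realises the minimum distance from $O$ to $\overline R$ --- shows that a crease entering the band would force a bounding crease of $R$ cutting off a strictly larger minor segment that still contains $R$, contradicting the extremal choice. A handful of degenerate cases ($\ell$ a diameter; two bounding creases sharing a boundary point; a side of $R$ a proper subsegment of its crease; arc sides of $R$) need only small adjustments.

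The step I expect to be the real obstacle is the containment ``reflected band $\subseteq F$'' in the main case, i.e.\ checking that $A'$ neither pokes out past the two creases meeting $\ell$ at its endpoints nor past any of the farther sides of $R$. The first is a clean tangent-angle comparison that genuinely uses the hypothesis of $\ge 3$ bounding creases (it fails for a ``bigon''), but the second needs a slightly delicate global bound on how far the radius-$\rho$ arc $A'$ can wander inside $R$, and the case $O\notin\overline R$ requires the parallel but messier extremal argument sketched above. Everything else is routine plane geometry of chords and circular segments.
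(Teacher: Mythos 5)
Your main case is, in essence, the paper's proof: the paper takes $uv$ to be the bounding crease of $R$ subtending the smallest central angle $\theta$, observes that every other bounding crease subtends at least $2\theta$ on the side containing $uv$ (that side contains the at least two pairwise disjoint arcs cut off behind the remaining bounding creases, each of angle at least $\theta$), hence meets the circle there at tangent--chord angle at least $\theta$, and concludes that the reflected arc, which leaves $u$ and $v$ at angle $\theta/2$ to $uv$, cannot cross them. Your case split on whether $O\in\overline{R}$ is unnecessary: the same disjoint-arc count shows the minimizing crease can never be the one separating $O$ from $R$ (if it were, its $R$-side arc would have angle $\theta$ yet contain two disjoint arcs each of angle at least $\theta$), so the extremal crease always has $R$ on its major side and its minor segment is the region $C$ to fold over --- exactly the situation of your first case, with no hypothesis on the location of $O$.

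The second case as you have sketched it is not just superfluous but wrong. You choose $\ell$ through the point of $\overline{R}$ nearest $O$, so that $C$ is the minor segment \emph{containing} $R$ and the reflected band lies in the major segment, beyond $\ell$ from $R$. Whether $\ell$ is safe then depends on the creases in that major segment, which are completely unconstrained by $R$ or by your extremal choice; your justification speaks of ``a bounding crease of $R$ cutting off a strictly larger minor segment,'' but every bounding crease of $R$ lies in $C$, on the wrong side of $\ell$ to matter. Concretely: put $\ell$ at distance $0.9$ from $O$, a second crease parallel to it at distance $0.85$, and two short creases cutting off the corners of the minor segment of $\ell$ so that $R$ has three crease sides. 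The reflected band of $\ell$ reaches depth $0.1$ past $\ell$ toward $O$, so the parallel crease crosses it and $\ell$ is not safe; the lemma is saved only by one of the corner creases, which is what the minimum-angle choice selects. Separately, even in your first case the bound ``band height $\le\mathrm{dist}(O,\ell)$'' controls only how far the band gets toward $O$, not which side of a farther bounding crease it lies on, so it cannot by itself keep $A'$ clear of the farther sides; the tangent--chord comparison, i.e.\ that every other bounding crease meets the circle at angle at least $\theta$ on the side of the band, has to carry that step as well. You flag this as the delicate point, but it is the entire content of the lemma, and the case analysis you built around it should be discarded in favor of the single extremal choice above.
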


\begin{proof}
Let $uv$ be the crease bounding $R$ that subtends the smallest angle $\theta$ as viewed from the center point of the disk. Then $\theta<\pi$ so $uv$ meets the conditions of \autoref{fig:inescapable}.
The disk boundary meets line segment $uv$ at angles of $\theta/2$. Any other crease bounding $R$ must subtend an angle from the disk center (on the side containing $uv$) of at least $2\theta$, and therefore it must form an angle at the disk boundary of at least $\theta$. Therefore, the reflection of the disk boundary across $uv$ cannot cross the other disk, and $uv$ is a safe crease.
\end{proof}

\begin{lemma}
\label{lem:disk-local-global}
Every local outer flat folding of a disk is a global outer flat folding.
\end{lemma}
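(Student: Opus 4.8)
The plan is to argue by induction on the number of creases of $\varphi$, choosing between two arguments according to the combinatorial type of the face subdivision. Because the creases are finitely many pairwise non-crossing chords of the disk $K$, they cut $K$ into faces whose adjacency graph---with faces as nodes and creases as edges---is a tree, in which a face has degree equal to the number of creases bounding it. I distinguish the case where this tree is a path from the case where it has a vertex of degree at least three.

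Suppose first that the face tree is a path, so that the faces can be listed as $F_0,F_1,\dots,F_n$ with crease $c_i$ separating $F_{i-1}$ from $F_i$. Here I would construct the embeddings $\varphi_\epsilon$ directly: place each face $\varphi(F_i)$ flat in the plane $\{z=i\epsilon\}\subset\R^3$ (using the isometry that $\varphi$ applies to $F_i$), and join consecutive faces by a half-cylindrical fold of radius $O(\epsilon)$ running along the image of $c_i$, oriented so that $\varphi(F_{i-1})$ and $\varphi(F_i)$ lie to the same side of the line through $\varphi(c_i)$. The vertical projection of $\varphi_\epsilon$ is within $O(\epsilon)$ of $\varphi$, and $\varphi_\epsilon$ is an embedding: distinct faces sit in distinct horizontal levels; the fold along $c_i$ occupies only levels $i-1$ and $i$ and attaches to $F_{i-1}$ and $F_i$ along their (non-crossing) common chord $c_i$; and the two folds incident to a single face $F_i$ attach to it along its two boundary chords $c_i,c_{i+1}$, which again do not cross. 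This case also disposes of the base cases $n=0$ and $n=1$.

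Now suppose the face tree is not a path, so some face $R$ is bounded by $m\ge 3$ creases. By \autoref{lem:safe-chord-exists} one of them, $uv$, is a safe crease; since $R$ is not a circular segment it lies on the major side of $uv$, so the cap $C$ associated with $uv$ is the minor segment on the opposite side, and the remaining $m-1\ge 2$ creases bounding $R$ lie on the major side and hence outside the interior of $C$. Applying \autoref{lem:safe-crease} to the safe crease $uv$ produces outer local flat foldings $\varphi_1$ (crease $uv$ together with all creases inside $C$) and $\varphi_2$ (all remaining creases) such that $\varphi$ is an outer global flat folding whenever both $\varphi_1$ and $\varphi_2$ are. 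But $\varphi_2$ omits $uv$, and $\varphi_1$ omits the $m-1\ge 2$ creases of $R$ other than $uv$ (none of which can lie inside $C$, since they bound a face disjoint from the interior of $C$ and they cannot cross $uv$), so each of $\varphi_1,\varphi_2$ has strictly fewer creases than $\varphi$. By the inductive hypothesis both are outer global flat foldings, hence so is $\varphi$.

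The step I expect to be delicate is the self-intersection verification in the path case: one must check that the folds sharing a common chord endpoint (and, more generally, the several folds incident to one face) can be realized simultaneously in disjoint regions of $\R^3$. This amounts to the fact that an ``accordion'' of stacked layers always embeds, but it has to be set up with some care near shared vertices. A smaller technical point, in the non-path case, is confirming that $C$ genuinely lies on the side of $uv$ opposite $R$ and contains none of $R$'s other bounding creases---the fact that makes the surgery strictly reduce the crease count on both sides.
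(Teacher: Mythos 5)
Your proof is correct and follows essentially the same route as the paper: induction on the number of creases, using \autoref{lem:safe-chord-exists} and \autoref{lem:safe-crease} to split off a safe crease whenever some face is bounded by three or more creases, and pleating in the base case (your ``face tree is a path'' condition is exactly the paper's ``every region is bounded by at most two creases''). Your explicit verification that both $\varphi_1$ and $\varphi_2$ have strictly fewer creases, and your layered construction of the pleat embedding, are welcome elaborations of steps the paper leaves implicit.
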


\begin{proof}
We use induction on the number of creases of the folding.
If any region of the disk bounded by creases of the folding is bounded by three or more creases,
we can apply \autoref{lem:safe-chord-exists} to prove that a safe crease exists,
apply the induction hypothesis to the two subsets of creases on either side of the safe crease,
and conclude from \autoref{lem:safe-crease} that the same crease pattern can be realized as a global flat folding. As a base case, if every region of the disk is bounded by only one or two creases, we can pleat the remaining creases to construct a realization as a global flat folding.
\end{proof}

This gives us a complete characterization of the graphs of outer flat foldings of disks:

\begin{theorem}
A graph $G$ can be represented as the graph of a global outer flat folding of a disk if and only if $G$ is outerplanar.
\end{theorem}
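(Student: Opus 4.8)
The plan is to derive both directions from two facts already in hand: that outerplanar graphs are exactly those admitting a planar embedding with every vertex on the outer face, and \autoref{lem:disk-local-global}, which says every outer local flat folding of a disk is automatically an outer global flat folding.

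First I would treat the ``only if'' direction. If $G$ is the graph of a global outer flat folding $\varphi$ of a disk $K$, then $\varphi$ is in particular an outer local flat folding, so every crease is a chord of $K$. Moreover no two creases can cross in the interior of $K$: an interior crossing point would have a neighborhood covered by wedges carrying at least three distinct isometric mappings, and so would be a vertex point, contradicting the requirement that all vertex points lie on $\partial K$. Hence the creases form a plane graph drawn inside $K$, all of whose vertices --- the boundary folded points, which are precisely the chord endpoints --- lie on the circle $\partial K$. That is an outerplanar embedding of $G$, so $G$ is outerplanar.

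Next I would treat the ``if'' direction, where I may assume $G$ has no isolated vertices (an isolated vertex can never arise as a boundary folded point). Since $G$ is outerplanar, I would place its vertices on a circle $\partial K$ in the cyclic order in which they appear around the outer face of an outerplanar embedding and draw each edge as a straight chord; planarity of that embedding guarantees that the chords are pairwise non-crossing (equivalently, $G$ has a one-page book embedding). By the remark made earlier, that any finite non-crossing pattern of chords on $K$ is the crease pattern of an outer local flat folding of $K$, this chord pattern gives an outer local flat folding $\varphi$. I would then verify that the graph of $\varphi$ is exactly $G$: its boundary folded points are precisely the chord endpoints (an endpoint met by a single chord is a crease point, an endpoint met by several chords is a vertex point, and all lie on $\partial K$), and no crease other than the chosen chords is produced. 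Finally \autoref{lem:disk-local-global} upgrades $\varphi$ to an outer global flat folding, so $G$ is realized.

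The routine parts are the non-crossing observation in the ``only if'' direction and the choice of circular drawing in the ``if'' direction. The step needing the most care --- and the one carrying the real content --- is the verification in the ``if'' direction that the chord pattern extracted from an outerplanar drawing really induces an outer local flat folding whose graph is $G$, with no spurious vertex points or creases; this is exactly where the earlier constructive remark and \autoref{lem:disk-local-global} do the work, and once that bookkeeping is settled the theorem is just the conjunction of those two facts.
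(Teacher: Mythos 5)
Your proof is correct and follows essentially the same route as the paper: place the vertices on the circle in outer-face cyclic order, draw the edges as non-crossing chords, invoke the remark that any non-crossing chord pattern is an outer local flat folding, and upgrade via \autoref{lem:disk-local-global}. You are in fact somewhat more explicit than the paper, which leaves the ``only if'' direction (non-crossing chords with endpoints on the circle give an outerplanar embedding) and the isolated-vertex caveat implicit.
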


\begin{proof}
Place the vertices of $G$ on the boundary of a given disk, in the cyclic order given by their ordering along the outer face of $G$ (skipping repeated copies of the same vertex).
Draw $G$ using straight-line edges, and interpret the resulting drawing as the crease pattern of a local flat folding. By \autoref{lem:disk-local-global} it is also the crease pattern of a global flat folding.
\end{proof}

\begin{figure}[t]
\centering\includegraphics[width=0.25\textwidth]{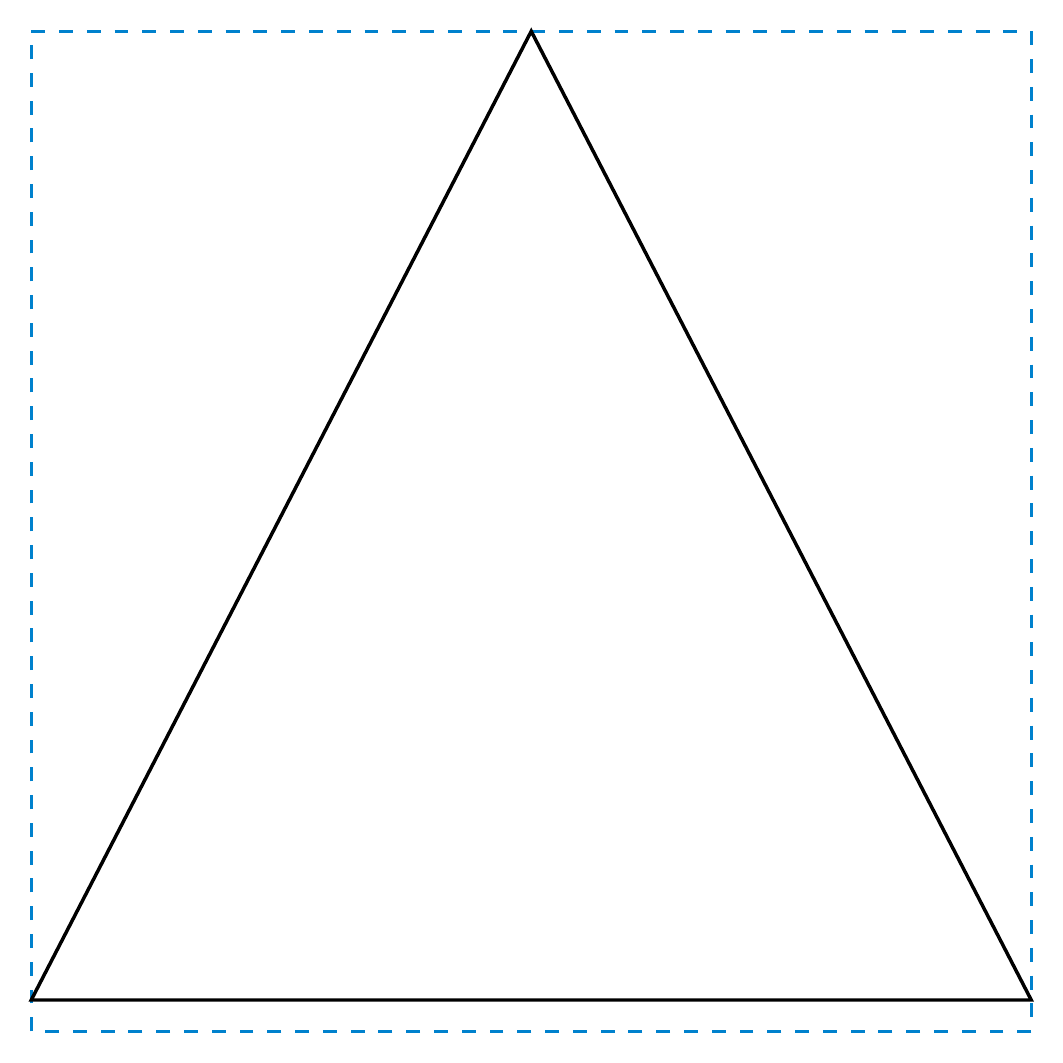}
\caption{A local flat folding of a square in which the region bounded by three creases has no safe crease. The bottom crease does not meet the conditions of \autoref{lem:inescapable}, and the reflection of the square's boundary across either top crease crosses the other top crease.}
\label{fig:no-safe-crease}
\end{figure}

\subsection{Squares}
In contrast to disks,
when $K$ is a square, there may be regions bounded by three or more
creases that have no safe crease (\autoref{fig:no-safe-crease}).
Nevertheless, we use similar concepts to safe creases to prove that outer local flat foldings of a square may be made global.

\begin{lemma}
\label{lem:square-local-global}
Let $\varphi$ be an outer local flat folding of a square $K$.
Then $\varphi$ may be realized as an outer global flat folding.
\end{lemma}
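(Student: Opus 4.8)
The plan is to argue by induction on the number of creases, following the template of the disk case (\autoref{lem:disk-local-global}), but replacing \emph{safe creases} --- which, as \autoref{fig:no-safe-crease} shows, need not exist when $K$ is a square --- by a more flexible decomposition in which we fold a flap together with everything it could collide with. As a base case, if no region of $K$ is bounded by three or more creases, then the creases form a pleatable family and $\varphi$ is realized by a sequence of pleats, exactly as in the base case of \autoref{lem:disk-local-global}. So assume some region is bounded by three or more creases.

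The first step of the inductive case is to exhibit a crease $uv$ of $\varphi$ together with one of the two regions $C$ into which $uv$ splits $K$ such that $C$ satisfies the monotonicity hypothesis of \autoref{lem:inescapable} --- for instance a crease whose $C$-side hugs a single corner of the square, or, when all creases join opposite sides, a crease lying close to one side, whose thin side is monotone. Normalizing $\varphi$ by a congruence so that it fixes $uv$ pointwise, \autoref{lem:inescapable} then guarantees that $\varphi$ carries $C$ into $C\cup\operatorname{refl}(C)$, where $\operatorname{refl}$ denotes reflection across $uv$. Next I would partition the creases into two sets: let $\varphi_1$ be generated by $uv$ together with all creases inside $C$, and then closed off by repeatedly adding any crease that meets a reflected region $\operatorname{refl}(C')$ arising from applying \autoref{lem:inescapable} to the part already accumulated, until the region swept by $\varphi_1$ meets no crease of $\varphi_2 := \varphi\setminus\varphi_1$. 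Provided $\varphi_2$ is still nonempty, both $\varphi_1$ and $\varphi_2$ are outer local flat foldings with strictly fewer creases than $\varphi$, hence outer global flat foldings by the induction hypothesis. One then composes them as in the proof of \autoref{lem:safe-crease}: flatten a $3$-D realization of $\varphi_1$, apply a $3$-D realization of $\varphi_2$ to its $x,y$ coordinates, and add back the $z$-coordinate of $\varphi_1$. The composition is an embedding precisely because, by the collision-closure construction, $\varphi_2$ acts as a single isometry on a neighborhood of everything $\varphi_1$ produces, so it introduces no new coincidences of $x,y$ coordinates among points already separated in $z$.

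The main obstacle is exactly the point that makes this more than a transcription of the disk proof: showing that the collision-closure process terminates with $\varphi_2$ still nonempty, i.e.\ that not every crease is eventually swallowed into $\varphi_1$. Here I would choose the initial crease $uv$ so that its $C$-side is as small as possible, bound how far each reflected region can protrude past $\partial K$ using the monotonicity condition, and use the fact that reflecting a corner flap of a square across a chord keeps the flap within a wedge of angle less than $\pi$ based near that corner --- so that the only creases that can ever be absorbed are those already ``close to that corner.'' An extremality or counting argument over the four corners should then leave at least one crease permanently outside $\varphi_1$. Making this quantitative is the technical heart of the proof and the step I expect to be hardest; and it must use properties special to the square, since \autoref{fig:unfoldable-triangle} shows the statement fails for an equilateral triangle, so no shape-independent shortcut (in particular, no reduction to \autoref{lem:disk-local-global} by inscribing the square in a disk, since extending the creases to chords of the disk can introduce crossings) is available.
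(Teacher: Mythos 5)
Your base case and the composition mechanism (flatten a realization of $\varphi_1$, apply $\varphi_2$ to its $x,y$ coordinates, restore the $z$) are fine, but the heart of your argument --- that the ``collision closure'' terminates with $\varphi_2$ nonempty --- is exactly the step you leave open, and it is not merely technically hard: it fails in the cases that matter. In the configuration of \autoref{fig:no-safe-crease}, the only creases satisfying the monotonicity hypothesis of \autoref{lem:inescapable} are the two corner creases, and the reflected image of either corner crosses the other corner's crease, so your closure immediately merges both corners into $\varphi_1$. At that point $\varphi_1$ is no longer a single flap, \autoref{lem:inescapable} gives no bound on where its folded image lies (your ``wedge near the corner'' bound applies to one crease's flap, not to the union after absorption), and the cascade can swallow the remaining creases, leaving $\varphi_2$ empty. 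More fundamentally, the hard instances are precisely those admitting \emph{no} decomposition into a sub-folding whose image avoids the other creases followed by the rest; the interference is unavoidable, and what saves the square is not an ordering of independent sub-folds but a choice of fold \emph{orientation}.

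That is the route the paper takes. It classifies creases by which pair of sides of the square they join (top-to-bottom and left-to-right creases cannot coexist, so after a rotation there are at most four corner families plus one side-to-side family, together forming a linear sequence that can be pleated), and then introduces \emph{semi-safe} creases: the outermost crease in a corner can interfere either with the crease in the adjacent corner on the same side or with the creases beyond the line joining their endpoints, but --- by an elementary slope argument special to the square --- not both. A case analysis on the number and type of semi-safe corners then chooses the starting orientation (mountain or valley) of each semi-safe corner's pleat so that the folded corner tucks between two sheets of the main pleat at a crease it does not interfere with. To salvage your induction you would have to weaken ``$\varphi_1$'s image avoids $\varphi_2$'s creases'' to this one-sided-interference property and thread the mountain/valley choice through the composition, which amounts to reconstructing the paper's argument.
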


\begin{proof}
Orient $K$ aligned with the coordinate axes of the Cartesian plane.
We may classify the creases of $\varphi$ into six types, according to which pair of distinct sides of $K$ they connect. However, a crease from the top side to the bottom would cross a crease from the left side to the right, so only one of these two types of crease can be present. Without loss of generality (by rotating $K$ if necessary) we may assume that there are no top-to-bottom creases.

If $\varphi$ has creases only in one of the two top corners of $K$ (that is, connecting the top side of $K$ to only one of the left or right sides of $K$), and similarly it has creases only in one of the bottom two corners of $K$) then the folds of $\varphi$ would form a linear sequence that we could safely pleat. And if $\varphi$ has creases in both corners on top (or symmetrically on the bottom) but one of the two of these creases that are farthest from their corners is safe, then we could begin our folding by pleating the creases in the safe corner, eliminating the creases there and reducing to the case where only one of the two top corners has creases. However, as \autoref{fig:no-safe-crease} shows, it may be the case that neither of the two two corners has a safe crease.

\begin{figure}[t]
\centering\includegraphics[width=0.8\textwidth]{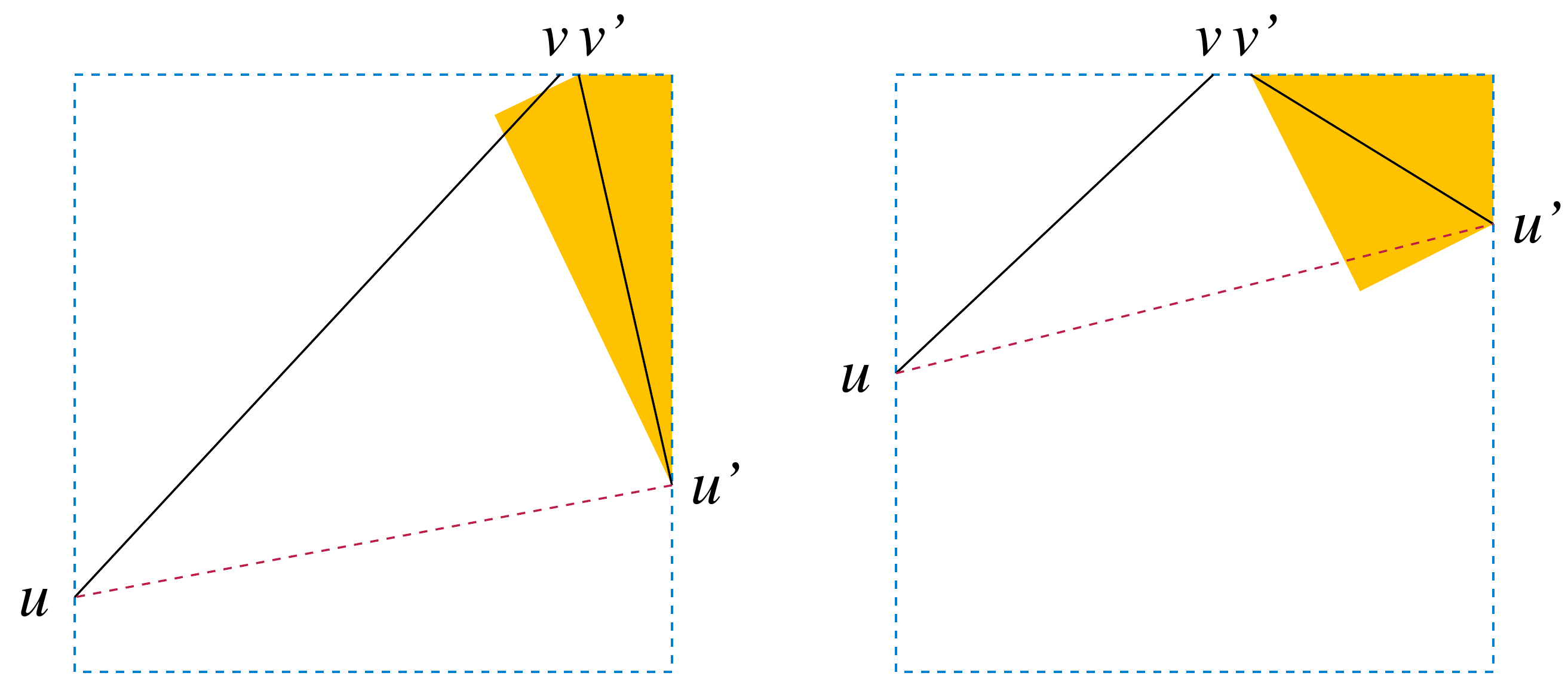}
\caption{The two cases for identifying $u'v'$ in the upper right corner of square $K$ as a semi-safe crease. The reflected image of the upper right corner across crease $u'v'$ can interfere with crease $uv$ in the upper left corner (left image), or with creases below line $uu''$ (bottom) but not both.}
\label{fig:semi-safe}
\end{figure}

In this case, let $uv$ and $u'v'$ be the two creases on the top left and top right, respectively, that are farthest from their corners, with $u$ on the left side of $K$, $v$ and $v'$ on the top side of $K$, and $u'$ on the right side of $K$. Assume without loss of generality that $u'$ has at least as large a $y$-coordinate as $u$ (otherwise flip $K$ from left to right to achieve this, without affecting its global foldability). Then it is possible for the fold at $u'v'$ to interfere with the fold at $uv$ (if the reflected image of the top right corner of the square across $u'v'$ crosses $uv$), or for it to interfere with folds below line $uu'$ (again, if the reflected image of the top right corner of the square crosses this line)
but only one of these two types of interference can happen. For, the fold at $u'v'$ can interfere with the fold at $uv$ only if $v'$ is closer to the top right corner of $K$ than $u'$, so that the reflected image of the corner has positive slope at $v'$ (\autoref{fig:semi-safe}, left). But the fold at $u'v'$ can interfere with folds below $uu'$ only if $u'$ is closer to the top right corner of $K$ than $v'$, so that the reflected image of the corner has positive slope at $u'$ (\autoref{fig:semi-safe}, right). Only one of these two things can happen. Because of this, we call crease $u'v'$ \emph{semi-safe}.

In the same way, if there are creases in both of the bottom two corners, we can identify one of the two creases farthest from their corner as being semi-safe. If we remove from the folding pattern of $\varphi$ these semi-safe creases and all of the other creases in the same corner, then the remaining creases (in the other two corners, and from one side of the square to the other) form a linear sequence that can be pleated. Before we make this pleat of the remaining creases, however, we will pleat the creases in each semi-safe corner of $K$. There are two choices for how to perform the pleat in each semi-safe corner (starting first with a mountain fold, or with a valley fold) and we make these choices according to the following case analysis:
\begin{itemize}
\item If there are no semi-safe corners, then all the creases of $\varphi$ can be pleated.
\item If there is a single semi-safe corner, then we pleat the creases in that corner
starting with whichever of a mountain fold or valley fold is opposite to the closest crease in the remaining folds that it interferes with (choosing arbitrarily if it doesn't interfere with any folds).
In this way, the folded semi-safe corner is placed between two sheets of the 3d pleat of the remaining creases such that the crease where these two sheets meet is one that it does not interfere with.
\item If there are two semi-safe corners that are separated from each other by at least one crease of the remaining creases of $\varphi$, we handle each one independently in the same way that we handled the case where there is only one semi-safe corner. We do the same if the top semi-safe corner interferes with the crease in the other top corner, and the bottom semi-safe corner interferes with the crease in the other bottom corner, because then these two semi-safe corners cannot interfere with each other.
\item If there is no crease separating the two semi-safe corners, the top semi-safe corner interferes with some creases that are not in the top corners, but the bottom semi-safe corner interferes only with creases in the other bottom corner, then we fold the bottom semi-safe corner first, with the opposite starting fold to the other bottom corner. This initial fold cannot interfere with any folds in the top corners, and its starting fold orientation is chosen in such a way that it also cannot interfere with any folds in the other bottom corner. Once we have made this fold, we can fold the top semi-safe corner, again using the opposite starting fold to the other bottom corner. This folding of the top semi-safe corner cannot interfere with the bottom semi-safe corner (because we have already folded it) nor with the other bottom corner (because it starts with a fold of the opposite orientation). Finally we pleat the remaining folds. The case when the top semi-safe corner interferes with creases in the top corner and the bottom semi-safe corner interferes with creases in the top corners is symmetric.
\item In the remaining case, there is no crease separating the two semi-safe corners, the bottom semi-safe corner interferes with creases in the top corners, and the top semi-safe corner interferes with creases in the bottom corners. (In particular the two semi-safe corners could interfere with each other, so they must be given opposite starting orientations). In this case it follows from the lack of a fold separating the corners that, in the pleat of the remaining folds outside of the semi-safe corners, the nearest fold in the non-semi-safe top and bottom corners (if they exist) have opposite orientations. We give the top semi-safe fold the same orientation (mountain or valley) as the neighboring fold in the other top corner, and the bottom semi-safe fold the same orientation as the neighboring fold in the other bottom corner. In this way, the two semi-safe folds can neither interfere with each other (as they have opposite starting fold orientations) nor with any other of the remaining folds. 
\end{itemize}
\end{proof}

The graphs that can be graphs of outer flat foldings of the square are not as easy to characterize as for the disk.  To simplify their description, we limit our attention to trees.
We define the \emph{spine} of a tree of three or more vertices to be the subtree formed by removing all degree-one vertices; for instance, the caterpillars are the trees whose spine is a path.

\begin{lemma}
\label{lem:few-spine-leaves}
Let $T$ be a tree that is the graph of an outer local flat folding of a convex $k$-gon.
Then the spine of $T$ has at most $k$ leaves.
\end{lemma}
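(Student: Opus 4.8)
The plan is to charge each leaf of the spine $S$ to a distinct corner of $K$; since $K$ has $k$ corners, this yields the bound. I will use only the following features of the situation: $T$ is drawn in $K$ with every vertex on $\partial K$; every edge is a chord that passes through the interior of $K$ (so its two endpoints lie on two \emph{distinct} sides of $K$, and no edge runs along $\partial K$); and the planar embedding furnished by the folding makes distinct edges interior-disjoint. I also recall the standard fact that $S$, obtained from $T$ by deleting all degree-one vertices, is connected. If $S$ has at most two vertices it has at most two leaves and we are done, since $k\ge 3$; so I may assume $S$ has at least three vertices. Then no two leaves of $S$ are adjacent in $S$, and every leaf $v$ of $S$ has exactly one $S$-neighbor $w_v$ together with at least one neighbor in $T$ that is a degree-one vertex of $T$ (a \emph{pendant leaf} of $v$).

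The first and main step will be to attach to each leaf $v$ of $S$ a pendant leaf $\ell_v$ whose edge seals off a vertex-free ``cap'' of $K$. Since $v$ lies on $\partial K$, all edges of $T$ at $v$ point into $K$ and so occupy an angular range of width strictly less than $\pi$; listing them as $e_{(1)},\dots,e_{(d)}$ with $d=\deg_T v\ge 2$ in this angular order, at least one of the two extreme edges $e_{(1)},e_{(d)}$ is a pendant edge, because the single spine edge $vw_v$ can be only one of them. Choosing such an extreme pendant edge, with far endpoint $\ell_v$, and letting $\gamma_v$ be the open arc of $\partial K$ from $v$ to $\ell_v$ lying on the side of the chord $v\ell_v$ \emph{away} from all other edges at $v$ (which, by extremality together with the width bound, all lie strictly to one side of the line through $v\ell_v$), I will argue that $T-\ell_v$ --- being connected, containing $v$, meeting the chord $v\ell_v$ only at $v$ where it points strictly to the far side, and unable to cross an edge of $T$ --- lies entirely in the closed complement of that cap. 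Hence the interior of $\gamma_v$ contains no vertex of $T$, and in particular no vertex of $S$.

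Next I will show the arcs $\gamma_v$ are pairwise interior-disjoint. For leaves $v\ne v'$ of $S$ the four points $v,\ell_v,v',\ell_{v'}$ are distinct (pendant leaves have degree one, spine leaves degree at least two, and distinct pendant leaves have distinct neighbors), and the chords $v\ell_v,v'\ell_{v'}$, being edges of $T$, do not cross; so $v'$ and $\ell_{v'}$ lie together in one of the two arcs of $\partial K$ cut off by $v$ and $\ell_v$, and since the interior of $\gamma_v$ contains neither of them, both lie in the complementary arc. If $\gamma_{v'}$ met the interior of $\gamma_v$, then --- its endpoints lying outside $\gamma_v$ --- the connected arc $\gamma_{v'}$ would have to contain an endpoint ($v$ or $\ell_v$) of $\gamma_v$ in its interior, contradicting that the interior of $\gamma_{v'}$ contains no vertex of $T$. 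Finally, since $v$ and $\ell_v$ lie on distinct sides of $K$, the arc $\gamma_v$ must contain a corner of $K$ in its interior, and a short check --- again using that $v\ell_v$ is not a boundary edge --- rules out that corner being an endpoint of $\gamma_v$; interior-disjointness of the $\gamma_v$ then makes these corners distinct, so $S$ has at most $k$ leaves. The one genuinely delicate point is the first step --- confirming that an angularly-extreme pendant edge really does enclose a vertex-free cap --- and it is precisely there that the hypothesis that all vertices lie on $\partial K$ (so that the edges at a vertex span less than a half-turn) is used; everything afterward is routine bookkeeping with non-crossing chords.
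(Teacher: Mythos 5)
Your proof is correct and follows essentially the same route as the paper's: each spine leaf is charged, via one of its pendant edges, to a corner of $K$ cut off by that chord, and the charges are shown to be distinct. The only substantive difference is bookkeeping --- you select an angularly extreme pendant edge so that the cut-off boundary arcs are vertex-free and pairwise disjoint, whereas the paper uses an arbitrary pendant edge (after perturbing vertices off the corners) and derives injectivity from the fact that every other spine leaf lies on the spine side of the chord.
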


\begin{proof}
Perturb the folding points on the boundary of the folding, if necessary, so that no folding point on the boundary of the $k$-gon lies at one of its corners.
For each leaf vertex $v$ of the spine, choose a degree-one neighbor $w$ of $v$ in $T$; $w$ must exist or else $v$ would either have been removed from the spine or would have a child in the spine.
Then crease $vw$ separates at least one vertex of the $k$-gon from the spine of $T$.
No other leaf vertex $v'$ of the spine can have a child crease $v'w'$ that separates the same vertex of the $k$-gon from the spine, because $v'$ is on the spine side of $vw$ so $v'w'$ does not separate the vertex of the $k$-gon from $v$. Therefore, the number of leaf edges of the spine is at least the number of vertices of the $k$-gon, which is $k$.
\end{proof}

\begin{lemma}
\label{lem:spine-fold}
Let $K$ be a $k$-gon and let $T$ be a tree whose spine has at most $k$ leaves.
Then $T$ is the graph of a local outer flat folding for $K$.
\end{lemma}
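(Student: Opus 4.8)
The plan is to reduce first to the case $k=\ell$, where $\ell$ is the number of leaves of the spine of $T$ (so that this lemma becomes a converse to \autoref{lem:few-spine-leaves}). A drawing of $T$ with its vertices on an $\ell$-gon is determined, apart from the exact shape of the polygon, by the cyclic order of the vertices around the boundary together with the assignment of each vertex to one of the $\ell$ sides; mapping those $\ell$ sides into any $\ell$ of the sides of a prescribed $k$-gon, $k\ge\ell$, in the same cyclic order, keeps every edge a chord joining two distinct sides (hence passing through the interior) and keeps the drawing free of interior crossings. So it suffices to draw $T$ inside an $\ell$-gon with its vertices on the boundary, its edges straight chords none of which runs along a side, and no two chords crossing in the interior; by the remark (stated just after the definition of an outer local flat folding) that an arbitrary finite non-crossing pattern of chords is already an outer local flat folding, any such drawing realizes $T$.

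I would prove this by induction on $|V(T)|$. For the base case, suppose $T$ is a caterpillar, so $\ell\le 2$: by the classical fact that the caterpillars are exactly the trees admitting a crossing-free two-layer drawing, $T$ can be drawn with every vertex on two of the sides of the polygon and every edge crossing between those two sides (the remaining side or sides being left empty). For the inductive step, assume $T$ is not a caterpillar, so $\ell\ge 3$. If some leaf $w$ of $T$ has a neighbor $v$ of degree at least three, delete $w$; this leaves the spine, and in particular $\ell$, unchanged. Otherwise every leaf has a degree-two neighbor, and we instead delete a leaf $w$ together with its neighbor $v$; since such a $v$ is a leaf of the spine of $T$, this removes at most one spine leaf and does not increase $\ell$. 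Apply the induction hypothesis to the smaller tree (harmlessly subdividing sides of the polygon if its spine has fewer leaves), and then reinsert the deleted vertices one at a time: reattaching a pendant at a vertex $u$ means routing one new chord from $u$ into a face of the current drawing incident to $u$, ending at a fresh boundary point placed on whichever side that face reaches.

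The step needing real care — and where I expect the main work — is guaranteeing geometric room for each reinsertion: to attach a pendant at $u$ I need $u$ to lie on a face that also touches a side of the polygon other than the one carrying $u$, and the drawing must retain this property wherever the next reinsertion occurs. So the induction really must carry a strengthened invariant, something like ``every vertex of $T$ that is a leaf, or is adjacent to a leaf, lies on a face touching a second side''; one then checks that the two-layer base drawing can be arranged to satisfy it (by pushing leaves and their neighbors toward the corners where the two occupied sides meet the rest of the polygon) and that neither reinsertion operation destroys it. A more global alternative would route the spine first so that its $\ell$ leaves occupy $\ell$ distinct corners of the $\ell$-gon while each internal spine vertex occupies an interior point of a side, subdividing that side in a pattern mirroring the branching of the spine, and then hang the pendant leaves of each spine vertex into the wedge-shaped faces left open beside it; the same difficulty reappears there as the need to ensure that every spine vertex bearing pendants has an open wedge reaching a differently labelled side.
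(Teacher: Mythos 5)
Your reduction to the case $k=\ell$ and your caterpillar base case are fine, but the proposal stops exactly where the lemma's actual content lies. You say yourself that the main work is ``guaranteeing geometric room for each reinsertion,'' propose an invariant (every leaf or leaf-neighbor lies on a face touching a second side), and then leave both its precise formulation and its verification to ``one then checks.'' That check is not routine, and it is not clear the invariant survives your own operations: when you reattach a pendant at $u$ by routing a chord from $u$ to a fresh point on a second side, that chord subdivides the face containing $u$, and another vertex $u'$ that shared this face may afterwards lie in a region bounded only by its own side and the new chord, losing its access to any second side. The order of reinsertions therefore matters, and nothing in the proposal controls it. The case where you delete a leaf together with its degree-two neighbor compounds the difficulty, since reinserting $v$ places it at a fresh boundary point whose incident face must then be shown to reach yet another side before $w$ can be attached to $v$. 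So there is a genuine gap: the one step you defer is the theorem.

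The paper's proof is essentially your ``global alternative,'' but it supplies the missing mechanism. It draws the spine with its leaves at the corners of $K$ and labels each face of this drawing by a side of $K$ on its boundary; it then two-colors the spine by parity of distance from a chosen root leaf and shifts each even (black) vertex onto the side labelling its leftmost incident face and each odd (white) vertex onto the side labelling its rightmost incident face, preserving the order of vertices along each side. The parity choice is the whole point: it guarantees that every spine vertex ends up with a visible segment of a differently labelled side (the side of the incident face it did \emph{not} shift toward), and these visible segments are simultaneously available to all spine vertices, so all pendant leaves can be hung at once without interference. If you want to salvage the inductive route, you would need an invariant of comparable strength --- one that reserves, for every spine vertex, a boundary arc on a second side that no later chord is permitted to cut off --- which is essentially this parity assignment in disguise.
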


\begin{figure}[t]
\centering\includegraphics[width=0.9\textwidth]{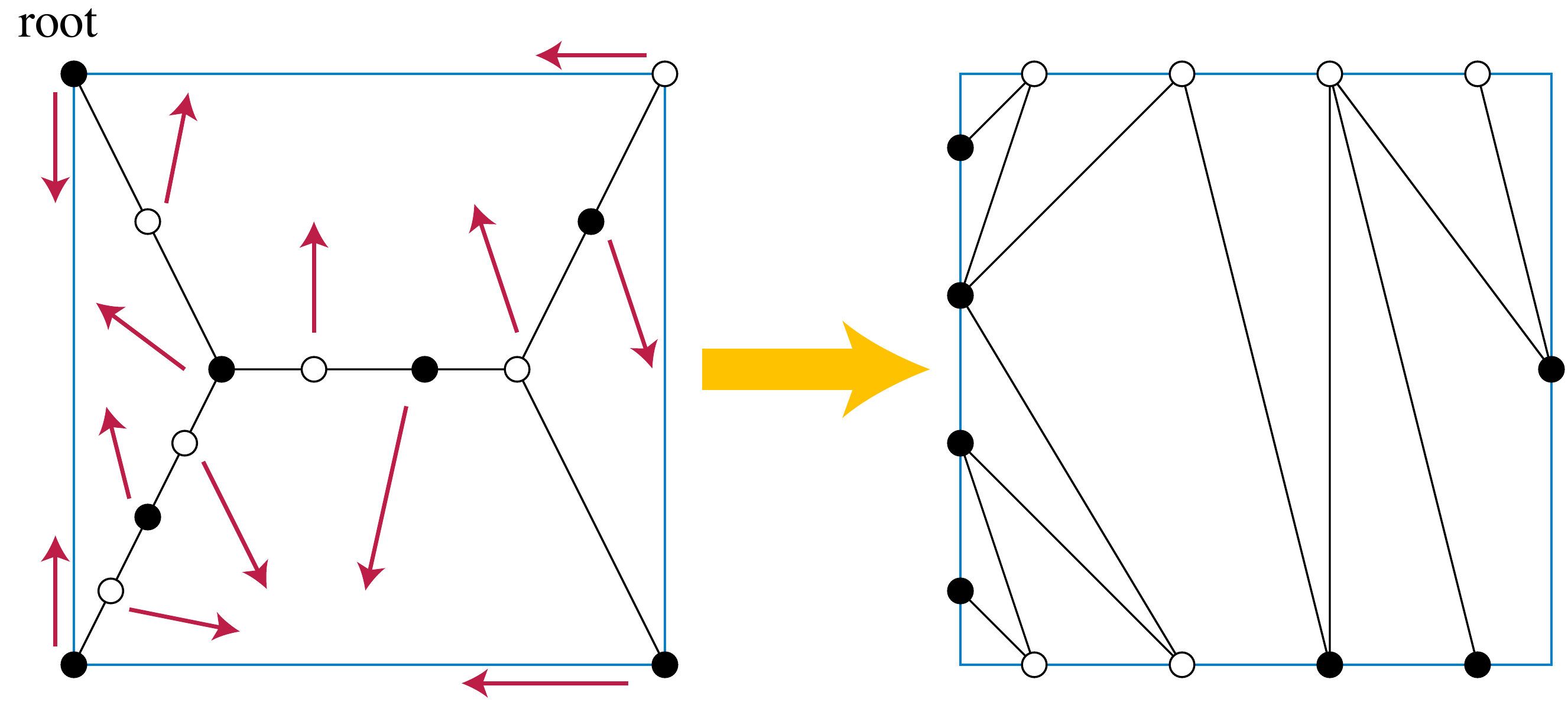}
\caption{Folding the spine of a tree $T$ onto a polygon with as many sides as the spine has leaves (\autoref{lem:spine-fold}).}
\label{fig:spine-fold}
\end{figure}

\begin{proof}
Draw the spine of $T$ on $K$ so that its leaves are at the vertices of $K$ and its other edges are interior to $K$. Each face of this drawing is bounded by at least one side of $K$; choose one of these sides as the label for its face. Choose one of the leaves of $T$ as its root, and two-color $T$ by distance from the root, black at even distances and white at odd distances (\autoref{fig:spine-fold}, left). Shift each vertex of the drawing onto the boundary of $K$, shifting the black vertices onto the side of $K$ given by the label of their leftmost incident face and the white vertices onto the side given by the label of their rightmost incident face, and preserving the order of the vertices mapped to the same side of $K$ within each region. (\autoref{fig:spine-fold}; the red arrows on the left show the shifting direction and the right side of the figure shows the result.)

Then in the resulting drawing of the spine, each spine vertex has a visible segment of polygon boundary that it is not on: for a black vertex this is the side of $K$ given by the label of its rightmost incident face (the one it didn't shift onto) and for a white vertex this is the side of $K$ given by the label of its leftmost incident face. All of the leaf vertices of $T$ can be drawn by connecting spine vertices to folding points on these visible segments of polygon boundary.
\end{proof}

This completes our characterization of trees that can be the graphs of outer foldings on a square:

\begin{theorem}
A tree $T$ is the graph of an outer global flat folding on a square if and only if the spine of $T$ has at most four leaves.
\end{theorem}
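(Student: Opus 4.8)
The plan is to obtain the theorem by assembling the three lemmas already proved for outer flat foldings of squares, treating the two directions of the biconditional separately. For the \emph{necessity} direction, suppose $T$ is realized as the graph of an outer global flat folding on a square. By definition an outer global flat folding is in particular an outer local flat folding, and a square is a convex $4$-gon, so \autoref{lem:few-spine-leaves} applies with $k=4$ and tells us immediately that the spine of $T$ has at most four leaves. Once \autoref{lem:few-spine-leaves} is in hand this direction is a one-line deduction.

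For the \emph{sufficiency} direction, suppose the spine of $T$ has at most four leaves. Apply \autoref{lem:spine-fold} with $K$ taken to be a square, so that $k=4$; this produces a local outer flat folding of the square whose graph is exactly $T$. Then \autoref{lem:square-local-global} upgrades this local folding to an outer global flat folding, and since the crease pattern (and hence the graph) is unchanged, $T$ is the graph of an outer global flat folding on a square, as desired.

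It remains to dispose of the trees too small for the notion of spine to apply. The paper's definition of spine already presupposes at least three vertices, and a tree with fewer vertices cannot be the graph of any outer flat folding at all: such a folding must contain a vertex point, a vertex point has at least two creases incident to it, and those creases reach two further distinct boundary points, forcing the graph to have at least three vertices. So the theorem is naturally read as a statement about trees with at least three vertices, for which the argument above applies; I would note this at the start of the proof.

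I do not expect any real obstacle \emph{within} the theorem's proof, which is a short combination of the three lemmas; the weight of the argument lies in those lemmas, above all \autoref{lem:square-local-global} with its case analysis over the ``semi-safe'' corner configurations and the choices of starting mountain/valley orientation for the corner pleats, and \autoref{lem:spine-fold} with its shifting construction that gives every spine vertex a visible side of the square on which to attach its leaves. The only points needing care in the theorem proof itself are checking that the construction of \autoref{lem:spine-fold} yields exactly $T$ (not a subdivision or a proper supergraph) and that the realization really lives on a square rather than on some other convex quadrilateral.
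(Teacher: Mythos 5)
Your proposal is correct and follows exactly the paper's own proof: necessity from \autoref{lem:few-spine-leaves} with $k=4$, and sufficiency by combining \autoref{lem:spine-fold} with \autoref{lem:square-local-global}. The extra remark about trees with fewer than three vertices is a reasonable bit of added care but is not needed beyond what the paper states.
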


\begin{proof}
The impossibility of realizing trees with more spine leaves is \autoref{lem:few-spine-leaves}.
If $T$ does have four or fewer spine leaves, we can find a realization as an outer local flat folding by \autoref{lem:spine-fold} and convert it to an outer global flat folding by \autoref{lem:square-local-global}.
\end{proof}

\end{document}